\newcommand{\be}{\begin{equation}}
\newcommand{\ee}{\end{equation}}
\newcommand{\bea}{\begin{array}}
\newcommand{\ea}{\end{array}}
\newcommand{\beqa}{\begin{eqnarray}}
\newcommand{\eeqa}{\end{eqnarray}}
\newcommand{\bean}{\begin{eqnarray*}}
\newcommand{\eean}{\end{eqnarray*}}
\numberwithin{equation}{section}
\newtheorem{proposition}{\textsc{Proposition}}[section]
\theoremstyle{definition}
\theoremstyle{remark}
\begin{document}


\title{Induced Gravity from Quantum Spacetime by Twisted Deformation of the Quantum Poincar\'e Group } 



\author{Cesar A. Aguill\'on}
\email[]{ceaguillon@ciencias.unam.mx}
\affiliation{Instituto de Ciencias Nucleares,\\
Universidad Nacional Aut\'onoma de M\'exico,\\
 A. Postal 70-543 , M\'exico D.F., M\'exico \\
}
\author{Albert Much}
\email[]{amuch@matmor.unam.mx}
\affiliation{Instituto de Ciencias Nucleares,\\
Universidad Nacional Aut\'onoma de M\'exico,\\
 A. Postal 70-543 , M\'exico D.F., M\'exico \\
}
\affiliation{Centro de Ciencias Matem\'aticas, \\
 UNAM, Morelia, Michoac\'an, M\'exico
 }

\author{Marcos Rosenbaum}
\email[]{mrosen@nucleares.unam.mx}
\affiliation{Instituto de Ciencias Nucleares,\\
Universidad Nacional Aut\'onoma de M\'exico,\\
 A. Postal 70-543 , M\'exico D.F., M\'exico \\
}

\author{J. David Vergara}
\email[]{vergara@nucleares.unam.mx}
\affiliation{Instituto de Ciencias Nucleares,\\
Universidad Nacional Aut\'onoma de M\'exico,\\
 A. Postal 70-543 , M\'exico D.F., M\'exico \\
}


\date{\today}

\begin{abstract}
We investigate a quantum geometric space in the context of what could be considered an emerging effective theory from Quantum Gravity. Specifically we consider a two-parameter class of twisted Poincar\'e algebras, from which Lie-algebraic noncommutativities of the translations are derived as well as associative star-products, deformed Riemannian geometries, Lie-algebraic twisted Minkowski spaces and quantum effects that arise as noncommutativities. Starting from a universal differential algebra of forms based on the above mentioned Lie-algebraic noncommutativities of the translations, we construct the noncommutative differential forms and Inner and Outer derivations, which are the noncommutative equivalents of the vector fields in the case of commutative differential geometry. Having established the essentials of this formalism we construct a bimodule, required to be central under the action of the Inner derivations in order to have well defined contractions and from where the algebraic dependence of its coefficients is derived. This again then defines the noncommutative equivalent of the geometrical line-element in commutative differential geometry. We stress, however, that even though the components of the twisted metric are by construction symmetric in their algebra valuation, this is not so for their inverse and thus to construct it we made use of Gel'fand's theory of quasi-determinants, which is conceptually straightforward but computationally becoming quite complicate beyond an algebra of 3 generators. The consequences of the noncommutativity of the Lie-algebra twisted 
geometry are further discussed.

\end{abstract}

\pacs{}

\maketitle 

\section{Introduction}
Reductionism  is an essential concept in Physics and in this spirit the Quantization of Relativity, at distances of the order of the Planck length, still remains to be one of the main great problems in the field. It is then not surprising that there is a great deal of work going on in this area. Thus, for instance, in \cite{CCM} it is proposed that a quantum geometry appears from a higher degree version of the Heisenberg commutation relations. On the other hand in \cite{BM} a theory of Poisson-Riemannian geometry an analysis is introduced  on the constraints on the classical geometry in such a way that the quantization exists. A natural path to define a noncommutative geometry is to extend the Gel'fand-Naimark-Segal duality to noncommutative algebras, where the noncommutative geometry of spacetime can be described by using an associative algebra $\mathcal A$, with unit, and the derivations $Der({\mathcal A})$ 
of $\mathcal A$ are graded differential Lie algebras and also $\mathcal A$-modules, with $Der({\mathcal A})$ playing the role of the Lie algebra vector fields \cite{dubois1}. This approach has appeared in several publications, for a review of it see for example \cite{dubois5}. An extension of this work, consisting in considering that the metric is a two-form central to the algebra  $\mathcal A$ implies that the metric is fixed, up to some free parameters \cite{b&m}.

As pointed out in \cite{dop}, in the case of Quantum Gravity the 
conjunction of the principles of Quantum Mechanics and Classical General Relativity imposes limits on the joint precision allowed in the measurement of the space-time coordinates of an event, due to the fact that the concentrated energy required by the Heisenberg Uncertainty Principle in order to localize an event should not be so strong as to hide the event itself to any distant observer - distant compared to the Planck length scale. These limitations lead to the space-time uncertainty relations
\be\label{1.1}
\Delta x^0 \sum^3 _{i=1} \Delta x^i \geq \lambda^2_P \;\;\;\;\; \sum_{j\leq 1 <k\leq 3} \Delta x^j \Delta x^k \geq \lambda^2_P,
\ee
 It has also been shown \cite{fred, rob} that the above uncertainty relations were exactly implemented by Commutation Relations of the form
\be\label{uncert}
[\hat{x}^\mu , \hat{x}^\nu ]=\frac{i}{\kappa^2}\; \theta^{\mu\nu}(\kappa \hat x),
\ee
where $\kappa$ is identified with the inverse of the Planck length  $\lambda_P= \large(\frac{G\hbar}{c^3}\large)^{1/2}$, and the limit in this case corresponds to $\zeta\to 0$, or $\kappa\to \infty$, while
\be\label{1.3}
\theta^{\mu\nu}(\kappa\hat x) = \theta^{\mu\nu}_{(0)} + \kappa \theta^{\mu\nu\rho}_{(1)} \hat x_\rho + \kappa^2 \theta^{\mu\nu\rho\tau}_{(2)} \hat x_\rho \hat x_\tau +\dots,
\ee
with $\theta^{\mu\nu}$  an antisymmetric tensor of units $\lambda^2_P$ (which can be set equal to $1$ by adopting the absolute units $\hbar=c=G=1$). Under certain requirements such as Lorentz invariance, this tensor has to satisfy certain ``quantum conditions" that imply that the Euclidean distance operator has a spectrum bounded from below by a constant of the order of the Planck length. However, if the classical Poincar\'e symmetries are not modified  then the numerical constant tensors 
$\theta_{(n)}^{\;\;\mu\nu\rho_1\dots \rho_n}$ break the Lorentz invariance, so a quantum deformation is needed in order that the commutator (\ref{uncert}) remains invariant \cite{l-w}.

Other models  lead to different space-time uncertainties and involve limitations that
do not pose restrictions in the measurement of a single coordinate, although they suggest, however, minimal uncertainties in the measurements of area and volume operators.
Thus, even though the various different models so far considered in the literature (see {\it e.g.} \cite{camelia, tomassi, martinetti}  and references therein) lead to different quantum conditions that the tensor $\theta^{\mu\nu}$ is required to satisfy, they all point out to the concept that, due to limitations in localizability for events below the Planck scale, space-time rather than appearing as a smooth manifold is expected to be more appropriately described as a mathematical object (the quantum space), where ``coordinates" are self-adjoint operators acting on some Hilbert space, such that the spectrum of space-time observables, constructed from them, is bounded from below by dimensions in the orders of powers of the Planck length. So, from a qualitative and operational meaningful point of view, their common denominator suggests a sort of discrete cellular structure for describing physical space \cite{connes1}, \cite{landi}.\\
In particular, a Physical-Mathematical directrix that summarizes the great advances in unification of the Fundamental Interactions are the Yang-Mills and Gravitation theories, founded on the notion of a connection (gauge or linear) on fiber bundles. This has opened the possibility to extend these notions to the field of Noncommutativity Geometry, based on a new classical duality: the Theorem of Serre-Swan \cite{S-S} that establishes a complete equivalence between the category of vector bundles over a smooth and compact space and its maps, with the category of projective modules of a finite type over commutative algebras and their module morphisms. The above, together with the Gel'fand-Naimark Theorem which states that, given any commutative $C^{\star}$-algebra, a Hausdorff topological space can be reconstructed so that the algebra can be isometrically and $\star$-isomorphically reconstructed as an algebra of complex valued continuous functions and therefore implies that, for a noncommutative algebra as a starting ingredient, the analogy to vector bundles is the projective modules of a finite type over that algebra  \cite{G-N},  \cite{Dor},  \cite{Fuji}. 

Thus, even though a complete and final Quantum Gravity is still beyond reach, it is reasonable to expect that, as an Ontologically more fundamental category, Noncommutative Geometry constitutes a promising effective theory  approach to  Quantum Gravity at distances of an order of the Planck length,
whereby, in some emerging limit of the full theory, General Relativity would be recovered when $\kappa\to\infty$. Such an effective theory would be Quantum Geometry, as it seems naturally suggested by the arguments above which lead to the hypothesis that quantum space-time is an effect of Quantum Gravity. Such a theory would of course be a bridge between the possibly combinatorial Quantum Gravity and the geometry of the classical continuum which should be obtained in some limit. Hence, for the present, we must consider Noncommutative Geometry  as a more general notion that, because of its noncommutativity, it must be a correct scenario for the phenomenology and test of the first quasi-classical corrections to quantum gravity, but beyond that, in the realm of strong emergence, it would indicate the mathematical constrictions on the structure of quantum gravity itself that would have to emerge naturally from the complete and final theory \cite {b&m}.\

The plan of the paper is as follows. In Section 2 we introduce our spacetime algebra ${\mathcal A}$ as a Lie-algebraic deformation of the Minkowski spacetime, based on the work in \cite{l-w}. In Section 3, to fix notation and for self-consistency and we present a brief summary of the elements of noncommutative differential forms and derivation algebras on which the paper is based and derive the quantum metric components required in order to have a central metric bimodule; we also compute the quantum determinant and the quantum inverse metric based on the work in \cite{GG}. In Sections 4 and 5, we apply the previously outlined elements on the theory of Inner and Outer derivations as well as of noncommutative differential geometry to derive the covariant derivative and noncommutative connection symbols, in general. We then consider the case of no-torsion and metricity that, although contrary to what happens in the classical Levi-Civita differential geometry, the connection is not unique but allows us to establish further relations between the connection symbols in order to provide a somewhat more tractable noncommutative Riemannian Geometry for the metric central bimodule and a more computationally viable approach to the inverse  matrix presented in Section 3, which could be implemented in the contractions of the noncommutative Riemannian geometry. We conclude finally with some observations concerning the effect of the noncommutativity on the twisted geometry and give an outlook on quantization by the addition of a Hermitian structure to the formalism, as well as  Appendices A and B where some of the more extensive details needed are included in order to make the derivations of the twisted determinant and metricity in the text more tractable.
 \\
 \section{Quantum Poincar\'e Algebras and Quantum Poincar\'e Groups}   
 
Now, in order to construct a plausible Quantum Geometry we first need to construct a quantum isometry that leaves invariant a deformed metric. For this purpose we need to 
consider first, as mentioned in the Introduction, the appropriate quantum deformations of Poincar\'e symmetries such that Lorentz invariance of the commutator (\ref{uncert}) is preserved. Making use of the results in \cite{l-w} and \cite{z}, which we summarize here, we have that an arbitrary four-dimensional Poincar\'e algebra of r-matrices can be split as
\be\label{2.4}
r= \frac{1}{2\kappa^{2}} \theta^{\mu\nu}_{(0)} P_\mu \wedge P_\nu + \frac{1}{2\kappa} \theta^{\mu\nu\rho}_{(1)} P_\mu \wedge M_{\nu\rho}
+\frac{1}{2}\theta^{\mu\nu\rho\sigma}_{(2)} M_{\mu\nu}\wedge M_{\rho\sigma}.
\ee
However, since the twist-deformed Hopf Poincar\'e algebra, generated by the Abelian carrier algebra $[P_\mu , P_\nu]=0$, does not agree with the translation sector of 
the dual $ \theta^{\mu\nu}_{(0)}$ deformed - Poincar\'e group, we shall consider in this work the case of the Lie-algebraic deformation where 
\be\label{2.5)}
\theta^{\mu\nu\rho}_{(1)}=\epsilon^{\mu\nu\rho\tau}v_\tau, \quad \theta^{\mu\nu}_{(0)}=\theta^{\mu\nu\rho\sigma}_{(2)}=0,
\ee
and the indices $\nu,\rho$ are fixed while $v_{\tau}$ is a numerical four-vector with two vanishing components associated with the $\nu,\rho$ indices. 
Here the r-matrix describes an Abelian deformation with carrier algebra described by the generators $M_{\alpha\beta}, P_\lambda$.

Using the $5\times 5$-matrix representation for the Poincar\'e generators:
\be\label{matrep}
(M_{\mu\nu})^a_{\;b}=\delta^a_\mu \eta_{\nu b} - \delta^a_\nu \eta_{\mu b}, \quad\quad (P_\mu)^a_{\; b}=\delta^a_\mu \delta^0_b, 
\ee
where $\eta$ stands for the Minkowski metric tensor, it can be shown that in the universal matrix \\
$\mathcal R_{(1)}={\mathcal F}_{\zeta}^{T}{\mathcal F}_{\zeta}^{-1}$, 
where
\be\label{twistzeta}
{\mathcal F}_{\zeta}^{-1}=\exp\left(- \frac{i}{2\kappa}(\zeta^{\lambda} P_{\lambda} \wedge M_{\alpha\beta})\right)
\ee
is the twist function of the Hopf algebra $\mathcal U_\zeta (\mathcal P)$, with $\alpha\not=\beta$ and fixed. In the above expression only the term linear in $\frac{i}{\kappa}$ is non-vanishing. 
Thus
\be\label{univr}
\mathcal R_{(1)}= 1\otimes 1 - \frac{i}{\kappa}\left(\zeta^{\lambda} P_{\lambda} \wedge M_{\alpha\beta}\right)
\ee
 Moreover, letting
\be\label{pqg}
\mathcal R_{(1)}{\mathcal T}_1 {\mathcal T}_2 = {\mathcal T}_2 {\mathcal T}_1 \mathcal R_{(1)},
\ee
with
\be\label{repp}
(\mathcal T)^a_{\; b}=\begin{pmatrix}\hat\Lambda^{\mu}_{\;\nu}& \hat a^\mu\\
                       0& 1\end{pmatrix},\ee
where ${\mathcal T}_1={\mathcal T}\otimes1$, ${\mathcal T}_2=1\otimes{\mathcal T}$, $\hat\Lambda^{\mu}_{\;\nu}$ parameterizes the quantum Lorentz rotation and $\hat a^\mu$ denotes quantum translations,
yields the $\zeta^\lambda$-deformed matrix Poincar\'e quantum group ${\mathcal G}_{\zeta}$.\\
The translation sector of this group, resulting from the above relations, is given by 
\be\label{trans}
[\hat a_\mu, \hat a_\nu]= \frac{i}{\kappa}\zeta_\nu (\eta_{\mu\alpha}\;\hat a_\beta -\eta_{\mu\beta}\;\hat a_\alpha ) +\frac{i}{\kappa}\zeta_\mu (\eta_{\nu\beta}\;\hat a_\alpha -\eta_{\nu\alpha}\;\hat a_\beta ),
\ee
 with the $\alpha$,  $\beta$ indices fixed and the vector $\boldsymbol\zeta$ has vanishing components $\zeta^{\alpha},\zeta^{\beta} $.

Consider now the Lie-algebraic deformation ${\mathcal M}_\zeta$ of the Minkowski space when resorting to the Hopf module algebra $\mathcal U_{\zeta}(\mathcal P)$, dual to ${\mathcal G}_{\zeta}$. 
We have that
\be\label{mink}
f({\bf x})\star g({\bf x}) := m_{\mathcal F} \circ (f({\bf x})\otimes g({\bf x}) )= m\circ\left({\mathcal F}^{(-1)}\triangleright f({\bf x})\otimes g({\bf x})\right ),
\ee
in general, and for the twist (\ref{twistzeta}):

\be\label{twistzeta2}
f({\bf x})\star g({\bf x})=m\circ\left(\exp(- \frac{i}{2\kappa}(\zeta^{\lambda} P_{\lambda} \wedge M_{\alpha\beta})f({\bf x})\otimes g({\bf x}) \right),
\ee
where the Schwartz functions $f({\bf x})$ and $g({\bf x})$ are Weyl symbols. Specifically, the resulting commutator for the $\zeta$-deformed Minkowski space-time coordinates is:
 \be\label{opdef1}
[ x_{\mu},  x_{\nu}]_\star = {\mathcal C}_{\mu\nu}^{\rho}  x_{\rho},
\ee
with 
\be\label{const1}
{\mathcal C}_{\mu\nu}^{\rho}= \frac{i}{\kappa}\zeta_{\mu}(\eta_{\nu\beta} \delta^{\rho}_{\alpha} -\eta_{\nu\alpha}\delta^{\rho}_{\beta}) + 
\frac{i}{\kappa}\zeta_{\nu}(\eta_{\mu\alpha}\delta^{\rho}_{\beta}-\eta_{\mu\beta}\delta^{\rho}_{\alpha}),
\ee
where indices of the structure constants are lowered or raised with the flat Minkowski metric $\eta_{\mu\nu}$.
In addition, the fact that the $\alpha, \beta$ components of the vector $\boldsymbol \xi$ are zero in the structure constants (\ref{const1}), allows us to re-express the space-time algebra (\ref{opdef1}) as :
\be\label{sub5}
[ X_\alpha ,  X_\lambda ] = 2i\xi_{\lambda}\eta_{\alpha\alpha} X_\beta, \quad [ X_\beta ,X_\lambda ] = -2i\xi_{\lambda}\eta_{\beta\beta} X_\alpha,
\ee

To simplify the algebra, we will analyze the case of $\alpha=1, \ \ \beta=2$. Note first that since $X_0 ,X_3$ and $X_1 ,X_2$ commute among themselves, we can immediately write the Lie algebra as
\setlength{\parindent}{0pt}
	\begin{align}\label{lie}
	[X_{1},X_{\lambda}]&=-2i\xi_{\lambda} X_{2},\nonumber\\
	[X_{2},X_{\lambda}]&=2i\xi_{\lambda}X_{1},
	\end{align}
where $\lambda=0,3$. Namely
	\begin{align*}
	[X_1,X_{0}]&=-2i\xi_{0}{X}_{2}, \qquad [{X}_{1},{X}_{2}]=0,\\
	[{X}_{1},{X}_{3}]&=-2i\xi_{3}{X}_{2},\qquad[{X}_{2},{X}_{0}]=2i\xi_{0}{X}_{1},\\
	[{X}_{2},{X}_{3}]&=\;\;\;2i\xi_{3}{X}_{1}, \qquad[{X}_{0},
{X}_{3}]=0,
	\end{align*}
with the center elements of the algebra ${\mathcal O_1}, {\mathcal O_2 \in {\mathcal Z(\mathcal A)}}$ given by 
\be\label{cas1}
{\mathcal O_1}:= \xi_3  X_0 -\xi_0 X_3, \qquad  {\mathcal O_2}:= ( X_1)^2 +( X_2)^2.
\ee

 It is also interesting and noteworthy to point out an isomorphism of the reduced algebra.

  \begin{proposition}
  	The reduced algebra generated by ${X_1}$, ${X_2}$  and ${X_3}$, i.e. 
  	    \begin{align*}
  	    [{X}_1 ,{X}_2 ]=0,\quad  [{X}_1 ,{X}_3 ]=-2i\xi_{3}{X}_2,\qquad     [{X}_2 ,{X}_3 ]= 2i\xi_{3}{X}_1   
  	    \end{align*}  
  	  is isomorphic   to the two-dimensional Poincare group, where the isomorphism is given by the following identification 
  	 \begin{align}\label{iso}
  	 {X}_1= i\beta P_{0},\qquad   {X}_2=  \beta P_{1}, \qquad   {X}_3= 2i\xi_{3}M_{01}.
  	 \end{align}    	
  	\end{proposition}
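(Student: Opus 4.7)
The plan is a direct verification: since both sides are three-dimensional, and the map in (\ref{iso}) sends each generator $X_i$ to a nonzero scalar multiple of a distinct generator of the two-dimensional Poincar\'e algebra (assuming $\beta\neq 0$ and $\xi_3\neq 0$), it is automatically a linear bijection of vector spaces. Therefore the only nontrivial content is to check that the three Lie brackets of the reduced algebra displayed in the statement coincide, under the substitution, with the three Lie brackets of the two-dimensional Poincar\'e algebra generated by $P_0,P_1,M_{01}$.

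Concretely, I would recall the 2D Poincar\'e relations $[P_0,P_1]=0$ together with $[M_{01},P_\mu]=i(\eta_{\mu 0}P_1-\eta_{\mu 1}P_0)$, specialized to the metric signature used throughout the paper (so that $[M_{01},P_0]=-iP_1$ and $[M_{01},P_1]=-iP_0$ in the $(+,-,-,-)$ convention implicit in (\ref{matrep})). I would then substitute $X_1=i\beta P_0$, $X_2=\beta P_1$, $X_3=2i\xi_3 M_{01}$ into each of the three commutators of the reduced algebra. The first bracket $[X_1,X_2]=i\beta^2[P_0,P_1]=0$ is immediate. The other two reduce to $[X_1,X_3]=2\beta\xi_3[M_{01},P_0]$ and $[X_2,X_3]=-2i\beta\xi_3[M_{01},P_1]$; using the Poincar\'e relations above, these simplify to $-2i\xi_3(\beta P_1)=-2i\xi_3 X_2$ and $2i\xi_3(i\beta P_0)=2i\xi_3 X_1$ respectively, matching the reduced algebra exactly.

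The calculation itself is routine; the main, and essentially only, obstacle is bookkeeping of factors of $i$ and of metric signs. A mismatch in either the signature of $\eta$ or the sign convention for $[M_{\mu\nu},P_\rho]$ changes the overall sign in $[X_1,X_3]$ and $[X_2,X_3]$, so one must fix conventions consistent with Section 2 (in particular with (\ref{matrep}) and (\ref{const1})) before beginning. Once this is pinned down, bijectivity plus the three bracket identities above complete the proof of the isomorphism.
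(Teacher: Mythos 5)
Your proposal is correct and takes essentially the same route as the paper, whose proof is exactly this direct verification: it cites $[P_0,P_1]=0$, $[P_0,M_{01}]=iP_1$, $[P_1,M_{01}]=iP_0$ together with the identifications (\ref{iso}), and your explicit commutators $[M_{01},P_0]=-iP_1$, $[M_{01},P_1]=-iP_0$ agree with these, so your bracket computations and the bijectivity remark complete the argument. The only blemish is internal to your convention bookkeeping: the values $[M_{01},P_0]=-iP_1$, $[M_{01},P_1]=-iP_0$ follow from your general formula $[M_{01},P_\mu]=i(\eta_{\mu 0}P_1-\eta_{\mu 1}P_0)$ with signature $(-,+,+,+)$ rather than the $(+,-,-,-)$ you cite, but since the specific commutators you actually use are the paper's, this does not affect the verification.
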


  \begin{proof}
  	The proof is straightforward by using the commutator relations 
  	 \begin{align*}
  	 [P_{0},P_{1}]=0,\qquad [P_{0},M_{01}]=iP_{1},\qquad [P_{1},M_{01}]=iP_{0},
  	 \end{align*}           
  	 and the identifications  (\ref{iso}).        
  \end{proof}	

 Thus, by comparing equation (\ref{opdef1}) with (\ref{trans}), we note that the deformed Minkowski space can be identified with the translation sector of the Poincar\'e quantum group $\mathcal G_\zeta$.
We shall make use of this identification in the next section.

\section{The  quantum metric as the center of the deformed Poincar\'e Lie algebra and its inverse}

\subsection{The Graded Universal Differential Forms and Derivation Algebras}
As pointed out in the Introduction, it is well known that the Serre-Swan theorem \cite{S-S} establishes a central link between projective finite modules and vector bundles so that an equivalence of categories exists between the vector bundle and the projective module of its smooth sections. 
It is thus pertinent to consider that the natural generalization of the notion of a vector bundle, which is an essential ingredient in the formulation of field theories, should be a finite projective module. Hence, having an appropriate 
noncommutative generalization of the algebra of differential forms $\Omega^n$ there is a natural connection on modules which generalizes the notion of connection on vector bundles \cite{dubois1}, \cite{connes1}, \cite{landi},\cite{connes2}, \cite{dubois2}, \cite{D-V4}. \\
Briefly, let $\Omega(\mathcal U (A))= \bigoplus_{n} \Omega^{n} (\mathcal{U (A))} $ be the differential graded algebra of forms defined by $\Omega^0 =\mathcal{ U (A)}$, where ${\mathcal U ( A)}$ is the associative universal enveloping 
algebra  of the ${\Bbb C}$-Lie algebra (\ref{lie}). The smallest subalgebra space $\Omega^{n}_D ({\mathcal U( A)})$ is defined by 
 $$\Omega^{n}_D ({\mathcal U( A)})= \underbrace{\Omega^{1}_D ({\mathcal U (A)})\Omega^{1}_D ({\mathcal U (A)})\cdots\Omega^{1}_D ({\mathcal U (A)})\Omega^{1}_D ({\mathcal U (A)})}_{n},$$
with its elements being finite linear combinations
of monomials of the form $\omega=A_0 d A_1 dA_2 ......dA_n $.  The linear exterior differential $ d:\Omega^{n}_D({\mathcal U( A)})\to \Omega^{n+1}_D({\mathcal U( A)})$
is  defined by 
\be\label{extd}
d (\omega_{1} \otimes_{\mathcal A}\cdots\otimes_{\mathcal A} \omega_{n}) =\sum_{i=1}^{n} (-1)^{i+1} \omega_{1}\otimes_{\mathcal A} \cdots \otimes_{\mathcal A}d\omega_{i} \otimes_{\mathcal A}\cdots \otimes_{\mathcal A}\omega_{n},
\qquad \forall \omega_j \in \Omega^{1}_D ({\mathcal U( A)}),
\ee
and satisfies the basic relations $d^2 =0$ and the extended Leibnitz rule with respect to the product $\otimes_{\mathcal A}$.
The algebra $\Omega^{n}_D ({\mathcal U( A)})$ as defined here is a left $({\mathcal U( A)})$-module but its right structure makes it also a bimodule. The noncommutative universality of the differential $d$ is constructed 
by means of a submodule $\Omega^{1}( \mathcal U( A))= Ker({\mathcal A}\otimes {\mathcal A}\stackrel{m}\rightarrow {\mathcal A})$, such that $m(a\otimes b)=ab$, generated by the Karoubi differential: $da= {\bf 1}\otimes a -a \;\otimes {\bf 1},\quad \forall a\in \mathcal A$. 
Defining now  an anti-derivation operator $i_{\hat X_{k}}$ of degree -1, which is a unique homomorphism $ i_{\hat X_{k}}: \Omega^{1}_{D} (\mathcal U( A))\to E$ on a bimodule $E$, restricted to $\Omega_{D} (\mathcal U(A))$  by  $i_{\hat X_{k}}\triangleright\Omega^{n}_{D} ({\mathcal U( A)})\to \Omega^{n-1}_{D} ({\mathcal U( A)})$, and where
$\hat X_{i}:= D_{i}=i_{\hat X_{i}}\circ d $ is a derivation (we shall be using indistinctly the notation $\hat X_{i}:= D_{i}$ to denote Outer-derivations), on the algebra $\mathcal A$ such that
\be\label{antid2}
i_{\hat X_k}\triangleright\omega(\hat X_1,\dots , \hat X_{n})= (-1)^{k+1} \omega(\hat X_1 , \dots\hat X_{k-1} , \check X_k, \hat X_{k+1}, \dots, \hat X_{n}), 
\ee
 here  $\check X_k $ means removal of $\hat X_k$ from the $n$-form $\omega$. 
In addition we have that for the noncommutative case the Lie-algebra relations
\begin{subequations}
\begin{align}\label{3.4}
&i_{\hat X_i}\circ i_{\hat X_j} +i_{\hat X_j} \circ i_{\hat X_i}=0,\\\label{3.4b}
&L_{\hat X_i}\circ i_{\hat X_j} -i_{\hat X_j} \circ L_{\hat X_i}=i_{[\hat X_i ,\hat X_j]},\\
&L_{\hat X_i}\circ L_{\hat X_j} -L_{\hat  X_j} \circ L_{\hat X_i}=L_{[\hat X_i , \hat X_j]}, 
\end{align}
\end{subequations}
are satisfied 
with $L_{\hat X_i}$ and $i_{ \hat X_i}$ being noncommutative derivation and antiderivation generalizations of the Lie derivative and inner multiplication of forms by vector fields, respectively, and where
\be\label{lie20} 
   L_{\hat X_i}=i_{\hat X_{i}} \circ \;d + d\;\circ i_{\hat X_{i}}
\ee
is a Lie algebra derivation of degree zero of $\Omega_{D} ({\mathcal U (A)}$.

The adjoint action  $\frak {ad}_{  X_{\rho}}=[X_{\rho},\; \boldsymbol\cdot\;] $ is an element of the {\bf Inner} (or Interior) invariant subalgebra $\frak {ad}_{\mathcal A}\in{\mathcal Int}(\mathcal U(\mathcal A))\subset Der(\mathcal U(\mathcal A))$. All other derivations are elements of the {\bf Outer} derivations such that the differential subalgebra $\Omega_{Out}$ is determined by 
\be\label{3.5}
\Omega_{Out}(\mathcal U(\mathcal A))= \big(\omega\in\Omega_{Out}(\mathcal U(\mathcal A)) \; | i_{\hat X_\rho}\triangleright\omega=0 \;\text {and}\; L_{\hat X_{\rho}}\;\omega=0\;\forall\;\ \hat X_{\rho}\in {\mathcal Int({\mathcal A}) }\big).
\ee
Hence the space of all derivations $Der(\mathcal U(\mathcal A))$ is the direct sum  
\begin{equation}\label{ext3}
Der(\mathcal U(\mathcal A))=\mathcal Int(\mathcal U(\mathcal A))\oplus Out (\mathcal U(\mathcal A)),
\ee
where $\mathcal Int(\mathcal U(\mathcal A))$ is a Lie-Ideal and $Out(\mathcal U(\mathcal A))=Der(\mathcal U(\mathcal A))/\mathcal Int(\mathcal U(\mathcal A))$.

Now a derivation on an arbitrary associative algebra is a linear map satisfying the Leibnitz rule
\be\label{leib1}
D_{X_\rho}(X_{\alpha} X_{\beta})=D_{X_\rho}(X_{\alpha})X_{\beta}+X_{\alpha}D_{X_\rho}(X_{\beta}).
\ee
Thus for $D_{X_\rho}, D_{X_\sigma}\in Der(\mathcal U(\mathcal A) $ it follows readily that
\be\label{leib2}
[D_{X_\rho},D_{X_\sigma}]X_{\alpha}X_{\beta}=[\big( D_{X_\rho} \circ D_{X_\sigma}-D_{X_\sigma} \circ D_{X_\rho}\big)X_{\alpha}]X_{\beta} +X_{\alpha}[\big( D_{X_\rho} \circ D_{X_\sigma}-D_{X_\sigma} \circ D_{X_\rho}\big)X_{\beta}],
\ee
so $\big( D_{X_\rho} \circ D_{X_\sigma}-D_{X_\sigma} \circ D_{X_\rho}\big)$ is itself a derivation which is also a $\mathcal{Z(U(A))}$-module and can be written in general as 
\be\label{z-mod1}
\big( D_{X_\rho} \circ D_{X_\sigma}-D_{X_\sigma} \circ D_{X_\rho}\big)=B_{[\rho, \sigma]}^{\tau}\big( \mathcal{Z(U(A))} \big)  D_{X_\tau},
\ee
and 
\be\label{subder4}
 D_{X_{\rho}}X_{\lambda}=\mathcal{N}^{\sigma}_{\rho\lambda}\big(\mathcal {Z(U(A))}\big) X_{\sigma}.
\ee

\subsection{Centrality of the Metric Bimodule}
Now in  order to have well-defined contractions in noncommutative Riemannian geometry we require that the metric two-form $ {\frak g} ={\frak g}_{\mu\nu}(\mathcal U(\mathcal A))\; \omega^\mu \otimes \omega^{\nu}\in \Omega^{2}_D$ be in the center of the Lie algebra.  We thus take the adjoint action of the subalgebra of Inner derivations on it as an isometry
which leaves  it invariant. Hence, by virtue of (\ref{3.5}), the metric 2-form is projected into the subalgebra $\Omega_{Int}(\mathcal U(\mathcal A))$ of {\bf Inner} derivations and,  consequently,
\be\label{linder3}
\frak{ ad}_{ X_\rho}\triangleright\Big( { g}_{\mu\nu}(\mathcal U(\mathcal A)) \; \omega^{\mu} \otimes \omega^{\nu} \Big)=[ X_{\rho},{ g}_{\mu\nu}\big(\mathcal U(\mathcal A)\big)]\big(\omega^\mu \otimes \omega^{\nu}\big) +   { g}_{\mu\nu}(\mathcal U(\mathcal A)) [ X_{\rho},(\omega^\mu \otimes \omega^{\nu})]=0,
 \ee
where the 1-forms $\omega^\mu $ are chosen here as dual to the Inner-derivations $\hat X_{\mu}=\frak {ad}_{X_{\mu}}$.
It then follows by (3.3b) that
\begin{align}
0=L_{\hat X_\rho}\triangleright\omega^{\mu} (\hat X_{\nu}) &=( L_{\hat X_{\rho}}\circ i_{\hat X_{\nu}})\triangleright\omega^\mu =
i_{\hat X_{\nu}}\circ L_{\hat X_{\rho}}\omega^{\mu}+i_{C_{\rho\nu}^{\lambda}({\frak ad}_{X_{\lambda}})}\triangleright\omega^\mu \nonumber\\
&=(L_{\hat X_\rho}\omega^\mu)(\hat X_{\nu}) +C_{\rho\nu}^{\;\;\;\mu} .
\end{align}
The above is clearly satisfied with
\be\label{derdual}
L_{\hat X_{\rho}}\omega^{\mu}=-C^{\mu}_{\rho\sigma} \; \omega^{\sigma}.
\ee

Consequently  (\ref{linder3}) results in
\be\label{add7}
[ X_{\rho}, { g}_{\mu\nu}(\mathcal U(\mathcal A))\; \omega^\mu \otimes \omega^{\nu}]=\Big([ X_{\rho}, { g}_{\mu\nu}\big(\mathcal U(\mathcal A)\big)] -
{ g}_{\sigma\nu}(\mathcal U(\mathcal A))C_{\rho\mu}^{\sigma} - {g}_{\mu\sigma}(\mathcal U(\mathcal A))C_{\rho\nu}^{\sigma}\Big)\omega^\mu \otimes \omega^\nu =0,\\
\ee
where the  $g_{\mu\nu}=g_{\nu\mu}:= g( X_{\mu},  X_{\nu})$ are matrix elements with entries from the translation sector subalgebra (generated by (\ref{trans})), of the  associative unitary enveloping Universal Poincar\'e algebra $\mathcal U(P)$. \\
In order to evaluate the first term on the right of (\ref{add7}) we represent the linear derivation operator $ L_{\hat X_\rho}$ by the adjoint action on the matrix element  
$g( X_{\mu}, X_{\nu})$, {\it i.e.} by $[ X_\rho, g( X_{\mu},  X_{\nu})]$, and evaluate this operator commutator by assuming, as an ansatz, that  $g_{\mu\nu}$ is a polynomial 
series of $X_\rho $ of second order.
Thus, writing 
\be\label{center1}
g( X_{\mu}, X_{\nu})=a_{(\mu\nu)}+ a^{\sigma}_{(\mu\nu)} X_{\sigma} + a^{\sigma\tau}_{(\mu\nu)} X_{\sigma} X_{\tau}
\ee
and, applying (\ref{const1}) we get
\begin{align}\label{cent2}
[ X_{\rho} , g( X_{\mu}, X_{\nu})]=&2ia^{\sigma}_{(\mu\nu)}\big[\xi_{\rho} \big(\eta_{\sigma2} X_{1} -\eta_{\sigma1} X_{2}\big)+\xi_{\sigma} \big(\eta_{\rho1} X_{2} -\eta_{\rho2} X_{1}\big)\big]\nonumber\\
&+2i a^{\sigma\tau}_{(\mu\nu)}\Big(\big[\xi_\rho \big(\eta_{\sigma 2} X_{1} X_{\tau} -\eta_{\sigma 1} X_{2} X_{\tau}\big)+\xi_{\sigma} \big(\eta_{\rho 1} X_2 X_{\tau}-\eta_{\rho 2} X_{1} X_{\tau}\big)\big]+\nonumber\\
&\big[\xi_{\rho} \big(\eta_{\tau2}\ X_\sigma X_1 -\eta_{\tau1} X_\sigma  X_2\big)+\xi_\tau \big(\eta_{\rho1} X_\sigma X_2-\eta_{\rho2}\ X_\sigma X_1\big)\big]\Big).
\end{align}

To complete the calculation of an equation for the metric as the center of the algebra, we observe that the two last terms in the right of (\ref{add7}) are given by
\be\label{sub4}
g([ X_\rho,  X_{\mu}], X_{\nu})=C_{\rho\mu}^{\tau}\;g_{\tau\nu},  \quad  g( X_{\mu}, [ X_\rho, X_{\nu}])=g_{\mu\tau}\;C_{\rho\nu}^{\tau}.
\ee

Therefore, combining (\ref{cent2})  with (\ref{sub4}) and (\ref{sub5}) we obtain, as a condition for centrality, that the coefficients of the universal enveloping algebra should be satisfied by the coefficients
in the following set of equations:
\begin{align}\label{cent3}
&2i a^{\sigma}_{(\mu\nu)}\big[(\xi_0 +\xi_3)\big(\eta_{\sigma2}X_1 -\eta_{\sigma1}X_2\big)+ \xi_\sigma \big(X_1 -X_2 \big)\big]\nonumber +\\
&2i  a^{\sigma\tau}_{(\mu\nu)}\Big(\xi_\sigma \big(X_1 -X_2 \big) X_\tau +\xi_\tau X_\sigma \big(X_1 -X_2 \big) + \nonumber\\
&(\xi_0 +\xi_3)\big[\big(\eta_{\sigma2}X_1 -\eta_{\sigma1}X_2\big) X_\tau + X_\sigma \big(\eta_{\tau 2}X_1 -\eta_{\tau1}X_2\big)\big]\Big) \\
&= 2i\xi_\rho\big[\eta_{\mu2} g_{1\nu} - \eta_{\mu1}g_{2\nu} + \eta_{\nu2}g_{\mu1} - \eta_{\nu1}g_{\mu2}\big].  \nonumber
\end{align}
It readily follows from the above equation that 
the commutator of $X_1$ and $X_2$ with the components $g_{00}, g_{03}, g_{33} $ is zero.
Then the commutation relations are
	\begin{subequations}
      \begin{align}\label{DE1a}
	[{X}_{0},{g}_{\mu\nu}]&=2i\xi_{0}(\eta_{\mu 2}{g}_{1\nu}-\eta_{\mu 1}{g}_{2\nu} + {g}_{\mu 1}\eta_{\nu 2} -{g}_{\mu 2}\eta_{\nu 1}),\\\label{DE2a}
	[{X}_{1},{g}_{\mu\nu}]&=-2i(\xi_{\mu}{g}_{2\nu} + \xi_{\nu}{g}_{\mu 2}),\\\label{DE2b}
	[{X}_{2},{g}_{\mu\nu}]&=2i(\xi_{\mu}{g}_{1\nu} + \xi_{\nu}{g}_{\mu 1}),\\\label{DE2c}
	[{X}_{3},{g}_{\mu\nu}]&=2i\xi_{3}(\eta_{\mu 2}{g}_{1\nu}-\eta_{\mu 1}{g}_{2\nu} + {g}_{\mu 1}\eta_{\nu 2} -{g}_{\mu 2}\eta_{\nu 1}).
	\end{align}
      \end{subequations}\\

\begin{proposition}
	The metric components ${g}_{00}$, ${g}_{03}$ and ${g}_{33}$ depend on polynomials in ${X}_{0}$ and ${X}_{3}$ to the second order at most. 
\end{proposition}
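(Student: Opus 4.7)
The plan is to substitute the quadratic ansatz (\ref{center1}) for $g_{\mu\nu}$ into the full set of commutator equations (\ref{DE1a})--(\ref{DE2c}) specialised to $(\mu,\nu) \in \{(0,0),(0,3),(3,3)\}$, and then read off which monomials can possibly appear. The first step is to notice that for these three index pairs, every factor $\eta_{\mu 1}, \eta_{\mu 2}, \eta_{\nu 1}, \eta_{\nu 2}$ on the right-hand sides of (\ref{DE1a}) and (\ref{DE2c}) carries an index in $\{0,3\}$ and hence vanishes, giving $[X_0, g_{\mu\nu}] = [X_3, g_{\mu\nu}] = 0$ at once. Similarly, (\ref{DE2a})--(\ref{DE2b}) reduce to multiples of off-diagonal components like $g_{02}, g_{12}$, whose vanishing is forced by running the same centrality analysis on those components; hence $g_{00}, g_{03}, g_{33}$ must commute with every generator and lie in the centre of $\mathcal{U}(\mathcal{A})$.

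Next I would compute $[X_1, g_{\mu\nu}]$ and $[X_2, g_{\mu\nu}]$ explicitly by evaluating the ansatz through the Leibniz rule, using that $[X_1,\,\cdot\,]$ kills $X_1, X_2$ and sends $X_0 \mapsto -2i\xi_0 X_2$, $X_3 \mapsto -2i\xi_3 X_2$, while $[X_2,\,\cdot\,]$ acts by the sign-reversed analogue with $X_2$ replaced by $X_1$. Monomials in the ansatz built only from $X_1, X_2$ already commute with $X_1$ and $X_2$, while monomials involving $X_0$ or $X_3$ produce an extra factor of $X_2$ (respectively $X_1$). Reducing the whole expression to PBW normal form with a fixed ordering (say $X_0 \prec X_1 \prec X_2 \prec X_3$) and equating coefficients of the resulting independent basis monomials to zero yields a homogeneous linear system on the ansatz coefficients $a^\sigma_{(\mu\nu)}$ and $a^{\sigma\tau}_{(\mu\nu)}$.

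Solving that system, every $a^\sigma_{(\mu\nu)}$ with $\sigma \in \{1,2\}$ and every $a^{\sigma\tau}_{(\mu\nu)}$ whose index set meets $\{1,2\}$ is forced to vanish, leaving only the coefficients indexed by $\{0,3\}$; this is exactly the statement that $g_{00}, g_{03}, g_{33}$ are polynomials of degree at most two in $X_0$ and $X_3$. The main obstacle is the PBW bookkeeping, since reorderings such as $X_2 X_0 = X_0 X_2 + 2i\xi_0 X_1$ and $X_3 X_2 = X_2 X_3 - 2i\xi_3 X_1$ generate lower-order corrections that must be carried through when matching coefficients. A secondary subtlety is that the Casimir $X_1^2 + X_2^2$ is itself central and quadratic, so centrality alone does not preclude it; ruling it out requires the off-diagonal consistency conditions already displayed in (\ref{cent3}), which tie $g_{00}$ to $g_{12}$ and its companions, or equivalently the classical limit $\xi \to 0$ back to $\eta_{\mu\nu}$.
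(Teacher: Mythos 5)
Your plan has two genuine problems. The first is circularity with respect to the part of the statement that actually needs proof. The content of the proposition is the degree bound: no term of order three or higher in $X_0$ or $X_3$ can occur. By substituting the quadratic ansatz (\ref{center1}) and then "reading off which monomials can possibly appear," you can at best determine which degree-$\le 2$ monomials survive; you cannot conclude "second order at most," because that bound is built into your ansatz from the outset. The paper's proof is a \emph{reductio}: it supposes $g_{00}$ contains a term $\epsilon X_{0}^{3}$ and chases it through the chain $[X_{2},g_{00}]=2i\xi_{0}(g_{01}+g_{10})$, $[X_{2},g_{01}]=2i\xi_{0}g_{11}$, $[X_{2},g_{11}]=0$, each commutator lowering the $X_0$-degree by one while introducing a factor of $X_1$, so that the last relation forces $\epsilon=0$. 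Nothing in your plan reproduces this descent, and without it (or an ansatz of arbitrary degree $N$ whose coefficients above degree two you then kill) the degree bound is assumed rather than proved.

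The second problem is a false claim in your first step that would derail even the restricted goal. You assert that (\ref{DE2a})--(\ref{DE2b}) reduce to multiples of off-diagonal components "whose vanishing is forced," so that $g_{00}$, $g_{03}$, $g_{33}$ commute with every generator and are central. They are not central: (\ref{DE2a}) gives $[X_{1},g_{00}]=-2i\xi_{0}(g_{20}+g_{02})$ and (\ref{DE2b}) gives $[X_{2},g_{00}]=2i\xi_{0}(g_{10}+g_{01})$, and the off-diagonal components do not vanish --- e.g.\ $g_{01}=\alpha X_{1}+\beta(X_{1}X_{3}+X_{3}X_{1})$ as in (\ref{Coeff1c}). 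These three components commute with $X_0$ and $X_3$ only, and the \emph{non}vanishing of their commutators with $X_1,X_2$ is precisely the engine of the paper's argument. Your closing remark that centrality cannot exclude the Casimir $\mathcal{O}_{2}=X_{1}^{2}+X_{2}^{2}$ is a fair observation, but it cuts against your own route: it shows that a purely centrality-based reading of these components cannot close the proof on its own.
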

\begin{proof}
	The proof is done by {\it reductio ad absurdum}. So first let us write the algebra needed to prove this for $\hat g_{00}$, i.e.
			\begin{align*}
		[{X}_{2},{g}_{00}]&=2i\xi_{0}({g}_{01}+{g}_{10}),\\
		[{X}_{2},{g}_{01}]&=2i\xi_{0}{g}_{11},\\
		[{X}_{2},{g}_{11}]&=0.
			\end{align*}
	Now let us assume that the metric component has a third degree polynomial term depending on ${X}_{0}$, i.e. $\epsilon {X}_{0}^3$. Due to the algebra and the above commutator relations this would imply that ${g}_{01}$ is of polynomial order two in ${X}_{0}$, this on the other hand implies that $ g_{11}$ is of order one in ${X}_{0}$. Moreover this would also mean that the last commutator relation only holds if and only if $\epsilon=0$. Similar arguments hold for the polynomial degree of ${X}_{3}$ and for the metric components ${g}_{03}$ and ${g}_{33}$.
\end{proof}

Thus from the previous proposition we can write:
\begin{subequations}
      \beqa\label{Comm1a}
	{g}_{00}:={g}_{00}(x_{0},x_{3})=\gamma_{0}+\gamma_{1}{X}_{0}+\gamma_{2}{X}_{3}+\gamma_{3}{X}_{0}{X}_{3}+\gamma_{4}{X}_{0}^{2}+\gamma_{5}{X}_{3}^{2},\\\label{Comm2a}
	{g}_{03}:={g}_{03}(x_{0},x_{3})=\rho_{0}+\rho_{1}{X}_{0}+\rho_{2}{X}_{3}+\rho_{3}\hat{X}_{0}{X}_{3}+\rho_{4}{X}_{0}^{2}+\rho_{5}{X}_{3}^{2},\\\label{Comm3a}
	{g}_{33}:={g}_{03}(x_{0},x_{3})=\kappa_{0}+\kappa_{1}{X}_{0}+\kappa_{2}{X}_{3}+\kappa_{3}{X}_{0}{X}_{3}+\kappa_{4}{X}_{0}^{2}+\kappa_{5}{X}_{3}^{2},
	\eeqa
 \end{subequations}

	where $\gamma_{i}$, $\rho_{i}$ and $\kappa_{i}~ \forall i=0,1,2,3,4,5$ 
are constants. Moreover we can make use of equation (\ref{cas1}) to express $ X_0$ in terms of 
$ X_3$ as
\be\label{cas2}
 X_0 =\frac{1}{\xi_3}\big(\mathcal O_1 + \xi_0  X_{3}),
\ee

which, when substituted into (\ref{Comm1a}-\ref{Comm3a}), results in	
	\begin{subequations}
      \beqa\label{Coeff1a}
	&{\frak g}_{00}:={g}_{00}=a_{0}+a_{1}{X_{3}}+a_{2}{X_{3}}^{2},\\\label{Coeff2a}
	&{\frak g}_{03}:={g}_{03}=b_{0}+b_{1}{X_{3}}+b_{2}{X_{3}}^{2},\\\label{Coeff3a}
	&{\frak g}_{33}:={g}_{33}=c_{0}+c_{1}{X_{3}}+c_{2}{X_{3}}^{2}.
\eeqa
 \end{subequations}

Note that in this way the above  three components of the metric are formally self-adjoint and from here on self-adjoint components of the metric will be denoted by the symbol ${\frak g}_{\mu\nu}$, with their coefficients
 explicitly given by
	\begin{subequations}
      \begin{align}\label{Coeff1b}
a_{0}&=\gamma_{0}+\frac{\gamma_{1}}{\xi_{3}}\mathcal O_1+\frac{\gamma_{4}}{\xi_{3}^{2}}\mathcal O_1^{2},~~a_{1}=\frac{\xi_{0}\gamma_{1}}{\xi_{3}}+\gamma_{2}+\frac{\gamma_{3}}{\xi_{3}}\mathcal O_1+\frac{2\xi_{0}\gamma_{4}}{\xi_{3}^{2}}\mathcal O_1,~~a_{2}=\frac{\xi_{0}\gamma_{3}}{\xi_{3}}+\frac{\xi_{0}^{2}\gamma_{4}}{\xi_{3}^{2}}+\gamma_{5},\\\label{Coeff2b}
b_{0}&=\rho_{0}+\frac{\rho_{1}}{\xi_{3}}\mathcal O_1+\frac{\rho_{4}}{\xi_{3}^{2}}\mathcal O_1^{2},~~b_{1}=\frac{\xi_{0}\rho_{1}}{\xi_{3}}+\rho_{2}+\frac{\rho_{3}}{\xi_{3}}{\mathcal O}_1+\frac{2\xi_{0}\rho_{4}}{\xi^2_{3}} {\mathcal O}_1,~~b_{2}=\frac{\xi_{0}\rho_{3}}{\xi_{3}}+\frac{\xi_{0}^{2}\rho_{4}}{\xi_{3}^{2}}+\rho_{5},\\\label{Coeff3b}
c_{0}&=\kappa_{0}+\frac{\kappa_{1}}{\xi_{3}}\mathcal O_1+\frac{\kappa_{4}}{\xi_{3}^{2}}\mathcal O_1^{2},~~c_{1}=\frac{\xi_{0}\kappa_{1}}{\xi_{3}}+\kappa_{2}+\frac{\kappa_{3}}{\xi_{3}}\mathcal O_1+\frac{2\xi_{0}\kappa_{4}} {\xi_{3}^{2}}{\mathcal O}_1,~~c_{2}=\frac{\xi_{0}\kappa_{3}}{\xi_{3}}+\frac{\xi_{0}^{2}\kappa_{4}}{\xi_{3}^{2}}+\kappa_{5}.
	\end{align}
 \end{subequations}

The remaining metric components are then derived from the commutation relations (\ref{DE1a}-\ref{DE2c}) and the last two sets of relations to yield  
	\begin{subequations}
      \begin{align}\label{Coeff1c}
	{\frak g}_{01}&=\alpha{X_{1}}+\beta({X_{1}}{X_{3}}+{X_{3}}{X_{1}}),\\\label{Coeff2c}
{\frak g}_{02}&=\alpha{X_{2}}+\beta({X_{2}}{X_{3}}+{X_{3}}{X_{2}}),\\\label{Coeff3c}
	{\frak g}_{13}&=\alpha^{\prime}{X_{1}}+\beta^{\prime}({X_{1}}{X_{3}}+{X_{3}}{X_{1}}),\\\label{Coeff4c}
	 {\frak g}_{23}&=\alpha^{\prime}{X_{2}}+\beta^{\prime}({X_{2}}{X_{3}}+{X_{3}}{X_{2}}) ,\\\label{Coeff5c}
	 {\frak g}_{11}&=\delta{X_{1}}^{2},\\\label{Coeff6c}
	{\frak g}_{22}&=\delta{X_{2}}^{2},\\\label{Coeff7c}
{\frak g}_{12}&=\frac{\delta}{2} ({X_{1}}{X_{2}}+{X_{2}}{X_{1}})=\delta {X_{1}}{X_{2}},
     \end{align}
     \end{subequations}

where
\be\label{adj}
g_{ji}= g^\dag_{ij}, 	\qquad {\frak g}_{ij}:=\frac{1}{2}(g_{ij}+g^\dag _{ij}) \;\;\;{\text for}\;\;\;\;i\leq j 
\ee
and 
\begin{equation}\label{consrel1}
	\alpha=\frac{\xi_{3}}{2\xi_{0}}a_{1}, \qquad \beta =\frac{\xi_{3}}{2\xi_{0}}a_{2}, \qquad \alpha^{\prime}=\frac{c_{1}}{2}, \qquad \beta^{\prime}=\frac{c_{2}}{2} \qquad \delta=\Big(\frac{\xi_{3}}{\xi_{0}}\Big)^{2}a_{2}\\
      \end{equation} 

	Next note that, due to the commutations relations, we will have relations between the coefficients $a_{i}$, $b_{i}$ and $c_{i}$. Indeed from (\ref{DE2a}) and (\ref{DE2b}), we have
	\begin{align*}
	[{X_{1}},{\frak g}_{03}]&=-2i(\xi_{0}{\frak g}_{23}+\xi_{3}{\frak g}_{02})\\
	[{X_{2}},{\frak g}_{03}]&=-2i(\xi_{0}{\frak g}_{13}+\xi_{3}{\frak g}_{01}),
	\end{align*}

	which, together with (\ref{Coeff2a}), (\ref{Coeff1c}-\ref{Coeff4c}) and (\ref{consrel1}), results in
	\begin{align}\label{b1}
     & b_1 = \big[\big(\frac{\xi_3}{2\xi_0}\big)a_1 + \big(\frac{\xi_0}{2\xi_3}\big)c_1\big]\\\label{b2}
     & b_2 = \big[\big(\frac{\xi_3}{2\xi_0}\big)a_2 + \big(\frac{\xi_0}{2\xi_3}\big)c_2\big].
	\end{align}

In a similar fashion, making use of (\ref{DE2a}) for $\mu=1$, $\nu=3$ and (\ref{Coeff7c}), we obtain
     \be\label{consrel2}
      c_2 = a_2 \big(\frac{\xi_3}{\xi_0}\big)^2=\delta . 
      \ee
     Replacing (\ref{consrel2}) in (\ref{b2}) yields
     \be\label{consrel3}
      b_2= \big(\frac{\xi_3}{\xi_0}\big)a_2=2\beta,\\
      \ee
with which we obtain a relation between the coefficients that are independent of $\mathcal{O}_{1}$.     
      
In this way the full  expression for the metric is given by
\begin{equation*}\label{qmet}
{{\frak g}_{\mu \nu }} = \left( {\begin{array}{*{20}{c}}
{{a_0} + {a_1}Z + {a_2}{Z^2}}&{\alpha X + \beta (XZ + ZX)}&{\alpha Y + \beta (YZ + ZY)}&{{b_0} + {b_1}Z + {b_2}{Z^2}}\\
{\alpha X + \beta (XZ + ZX)}&{\delta {X^2}}&{\delta XY}&{\alpha^{\prime} X + \beta '(XZ + ZX)}\\
{\alpha Y + \beta (YZ + ZY)}&{\delta XY}&{\delta {Y^2}}&{\alpha 'Y + \beta '(YZ + ZY)}\\
{{b_0} + {b_1}Z + {b_2}{Z^2}}&{\alpha 'X + \beta '(XZ + ZX)}&{\alpha 'Y + \beta '(YZ + ZY)}&{{c_0} + {c_1}Z + {c_2}{Z^2}}
\end{array}} \right),
\end{equation*}
where for simplicity we have put $X_{1}=X$, $X_{2}=Y$ and $X_{3}=Z$.

In order to analyze whether the quantum metric $g_{\mu\nu}$ is non-singular, 
let us now consider the determinant of the matrix ${\frak g}=[ {\frak g}_{\mu\nu}]_{\mu,\nu=0,1,2,3}$ with non-commuting entries,  defined as the sum \cite{kull}:
	\begin{align}\label{qdet}
	\det{\frak g}=\sum_{\sigma \in \frak{A}_{4}} {\frak g}_{\sigma(0)0}\cdot{\frak g}_{\sigma(1)1}\cdot{\frak g}_{\sigma(2)2}\cdot{\frak g}_{\sigma(3)3}\cdot  sign\sigma,    
	\end{align}
	where $\sigma$ are the permutations of the antisymmetrizer $\frak{A}_{4}$. Note that this preserves the appropriate ordering of the noncommutative entries of the matrix and is equivalent
     to summing sequentially the product of the elements of the first columns by their corresponding cofactors.

Explicitly this  determinant is given, factoring similar terms, by 
\begin{align}\label{quantumdeter}. 
\begin{split}
\det{\frak g}&={\frak g}_{00}[({\frak g}_{11}{\frak g}_{22}-{\frak g}_{21}{\frak g}_{12}){\frak g}_{33}+({\frak g}_{31}{\frak g}_{12}-{\frak g}_{11}{\frak g}_{32}){\frak g}_{23}+({\frak g}_{21}{\frak g}_{32}-{\frak g}_{31}
{\frak g}_{22}){\frak g}_{13}] \\
&+{\frak g}_{30}[({\frak g}_{11}{\frak g}_{02}-{\frak g}_{01}{\frak g}_{12}){\frak g}_{23}+({\frak g}_{01}{\frak g}_{22}-{\frak g}_{21}{\frak g}_{02}){\frak g}_{13}-({\frak g}_{11}{\frak g}_{22}-{\frak g}_{21}{\frak g}_{12})\
{\frak g}_{03}]  \\
&+{\frak g}_{10}[({\frak g}_{01}{\frak g}_{32}-{\frak g}_{31}{\frak g}_{02}){\frak g}_{23}-({\frak g}_{01}{\frak g}_{22}-{\frak g}_{21}{\frak g}_{02}){\frak g}_{33}-({\frak g}_{21}{\frak g}_{32}-{\frak g}_{31}{\frak g}_{22})
{\frak g}_{03}]  \\
 &-{\frak g}_{20}[({\frak g}_{11}{\frak g}_{02}-{\frak g}_{01}{\frak g}_{12}){\frak g}_{33}+({\frak g}_{01}{\frak g}_{32}-{\frak g}_{31}{\frak g}_{02}){\frak g}_{13}+({\frak g}_{31}{\frak g}_{12}-{\frak g}_{11}{\frak g}_{32})
{\frak g}_{03}].
\end{split} 
\end{align}
     	
Our essential objective is to show that this determinant is non-singular. Towards this end we make use of the equations  (\ref{Coeff1a}-\ref{Coeff3a}), (\ref{Coeff1c}-\ref{Coeff7c}) and the commutation relations to get
    
    \begin{subequations}
    \begin{align}\label{ME1}
	J_{1}&:={\frak g}_{11}{\frak g}_{22}-{\frak g}_{21}{\frak g}_{12}=0,\\\label{ME2}
	J_{2}&:={\frak g}_{31}{\frak g}_{12}-{\frak g}_{11}{\frak g}_{32}=-\xi_{3}\xi_{0}^{-1}J_{4},\\\label{ME3}
	J_{3}&:={\frak g}_{21}{\frak g}_{32}-{\frak g}_{31}{\frak g}_{22}=-\xi_{0}\xi_{3}^{-1}J_{5},\\\label{ME4}
	J_{4}&:={\frak g}_{11}{\frak g}_{02}-{\frak g}_{01}{\frak g}_{12}=2i\xi_{3}\beta\delta X_{1}(4X_{1}^{2}-3\mathcal{O}_{2}),\\\label{ME5}
	J_{5}&:={\frak g}_{01}{\frak g}_{22}-{\frak g}_{21}{\frak g}_{02}=2i\xi_{3}\beta\delta X_{2}(\mathcal{O}_{2}-4X_{1}^{2}),\\\label{ME6}
	J_{6}&:={\frak g}_{01}{\frak g}_{32}-{\frak g}_{31}{\frak g}_{02}=2i\xi_{3}\beta(\alpha^{'}-\xi_{0}^{-1}\xi_{3}\alpha)(\mathcal{O}_{2}-X_{1}^{2}),
	\end{align}
	\end{subequations}
	
Using the equations (\ref{ME1}-\ref{ME6}) and applying the commutation repeatedly (see details in the Appendix A) we have shown that the determinant is nonsingular and it is given by 
\begin{equation}\label{app}
\det \frak{{g}}_{\mu\nu}=K_{0}+X_{1}^{2}K_{1}+X_{1}X_{2}K_{2},
\end{equation}	
with $K_i=K_i({\mathcal Z(\mathcal A)},X_{3})$ for $i=0,1,2$.\\

\subsection{Inverse of Quantum Metric}

Next in order to derive the inverse of the above metric matrix we make use of the concept of quasi-determinants discussed in \cite{GG}. Thus we express the matrix metric in (\ref{qmet}) in terms of the following $2\times 2$ blocks
\be\label{qmet2}
{G} = \left( {\begin{array}{*{20}{cc|cc}}
{{{\frak g}_{00}}}&{{{\frak g}_{01}}}&{{{\frak g}_{02}}}&{{{\frak g}_{03}}}\\
{{{\frak g}_{10}}}&{{{\frak g}_{11}}}&{{{\frak g}_{12}}}&{{{\frak g}_{13}}}\\
\hline
{{{\frak g}_{20}}}&{{{\frak g}_{21}}}&{{{\frak g}_{22}}}&{{{\frak g}_{23}}}\\
{{{\frak g}_{30}}}&{{{\frak g}_{31}}}&{{{\frak g}_{32}}}&{{{\frak g}_{33}}}
\end{array}} \right) = \left( {\begin{array}{*{20}{c|c}}
{{G_{11}}}&{{G_{12}}}\\
\hline
{{G_{21}}}&{{G_{22}}}
\end{array}} \right).
\ee

If in addition we let 
\be\label{invqm}
{Y}=\left( {\begin{array}{*{20}{c|c}}
{{Y_{11}}}&{{Y_{12}}}\\
\hline
{{Y_{21}}}&{{Y_{22}}}
\end{array}} \right),
\ee
where 
\begin{align}\label{invqm2} 
Y_{11}=&(G_{11} - G_{12}G_{22}^{-1} G_{21})^{-1}\nonumber\\
Y_{12}=& - G_{11}^{-1} G_{12}Y_{22}\\
Y_{21}=&- G_{22}^{-1} G_{21}Y_{11}\nonumber\\
Y_{22}=&(G_{22} - G_{21}G_{11}^{-1} G_{12})^{-1},\nonumber
\end{align}
it can be readily shown that these also $2\times 2$ block matrices are such that (\ref{invqm}) is the right-inverse matrix $Y= G^{-1}$, {\it i.e.}
${GY} =\bigl(\begin{smallmatrix}
       I_2 & 0 \\ 0 & I_2
       \end{smallmatrix}\bigr)$. Substituting the inverses of the above metric blocks into (\ref{invqm2}) one obtains after a rather lengthy calculation an explicit expression for the right inverse $Y$ of the metric.
       
Let $\frak{y}^{\mu\nu}$ be the components of the matrix $Y$ defined above, our goal is to find these components explicitly. For this purpose, we start by finding the components of the matrices $G_{11} - G_{12}G_{22}^{-1} G_{21}$ and \\
$G_{22} - G_{21}G_{11}^{-1} G_{12}$ in the following way:
\begin{align*}
G_{11} - G_{12}G_{22}^{-1} G_{21}&= \left( \begin{array}{cc}
\frak{g}_{00} & \frak{g}_{01}\\
\frak{g}_{10} & \frak{g}_{11}
\end{array}\right)-\left( \begin{array}{cc}
\frak{g}_{02} & \frak{g}_{03}\\
\frak{g}_{12} & \frak{g}_{13}
\end{array}\right)\left( \begin{array}{cc}
\frak{s}_{22} & \frak{s}_{23}\\
\frak{s}_{32} & \frak{s}_{33}
\end{array}\right)\left( \begin{array}{cc}
\frak{g}_{20} & \frak{g}_{21}\\
\frak{g}_{30} & \frak{g}_{31}
\end{array}\right)  =\left( \begin{array}{cc}
\frak{l}_{00} & \frak{l}_{01}\\
\frak{l}_{10} & \frak{l}_{11}
\end{array}\right),\\
G_{22} - G_{21}G_{11}^{-1} G_{12}&= \left( \begin{array}{cc}
\frak{g}_{22} & \frak{g}_{23}\\
\frak{g}_{32} & \frak{g}_{33}
\end{array}\right)-\left( \begin{array}{cc}
\frak{g}_{20} & \frak{g}_{21}\\
\frak{g}_{30} & \frak{g}_{31}
\end{array}\right)\left( \begin{array}{cc}
\frak{s}_{00} & \frak{s}_{01}\\
\frak{s}_{10} & \frak{s}_{11}
\end{array}\right)\left( \begin{array}{cc}
\frak{g}_{02} & \frak{g}_{03}\\
\frak{g}_{12} & \frak{g}_{13}
\end{array}\right)  =\left( \begin{array}{cc}
\frak{l}_{22} & \frak{l}_{23}\\
\frak{l}_{32} & \frak{l}_{33}
\end{array}\right),
\end{align*}
where $\frak{s}_{\mu\nu}$ are the components of the inverse matrices $G_{11}$, $G_{22}$ obtained by applying the Gel'fand algorithm and are given by
\begin{align}\nonumber
\frak{s}_{00}&=(\frak{g}_{00}-\frak{g}_{01}\frak{g}_{11}^{-1}\frak{g}_{10})^{-1},    &\qquad & \frak{s}_{22}=(\frak{g}_{22}-\frak{g}_{23}\frak{g}_{33}^{-1}\frak{g}_{32})^{-1},\\\label{inverseGij}
\frak{s}_{01}&=-\frak{g}_{00}^{-1}\frak{g}_{01}\frak{s}_{11},        & \qquad &  \frak{s}_{23}=-\frak{g}_{22}^{-1}\frak{g}_{23}\frak{s}_{33},\\\nonumber
\frak{s}_{10}&=-\frak{g}_{11}^{-1}\frak{g}_{10}\frak{s}_{00}, &\qquad &  \frak{s}_{32}=-\frak{g}_{33}^{-1}\frak{g}_{32}\frak{s}_{22},\\\nonumber
\frak{s}_{11}&=(\frak{g}_{11}-\frak{g}_{10}\frak{g}_{00}^{-1}\frak{g}_{01})^{-1}, &\qquad &  \frak{s}_{33}=(\frak{g}_{33}-\frak{g}_{32}\frak{g}_{22}^{-1}\frak{g}_{23})^{-1};
\end{align}
while the explicit expressions for $\frak{l}_{\mu\nu}$ are: 
\begin{align}\nonumber
\frak{l}_{\kappa\lambda}&=\frak{g}_{\kappa\lambda}-(\frak{g}_{\kappa2}-\frak{g}_{\kappa3}\frak{g}_{33}^{-1}\frak{g}_{32})\frak{s}_{22}\frak{g}_{2\lambda}-(\frak{g}_{\kappa3}-\frak{g}_{\kappa2}\frak{g}_{22}^{-1}\frak{g}_{23})\frak{s}_{33}\frak{g}_{3\lambda},\\\label{lcomponents}
\frak{l}_{\tau\sigma}&=\frak{g}_{\tau\sigma}-(\frak{g}_{\tau0}-\frak{g}_{\tau1}\frak{g}_{11}^{-1}\frak{g}_{10})\frak{s}_{00}\frak{g}_{0\sigma}-(\frak{g}_{\tau1}-\frak{g}_{\tau0}\frak{g}_{00}^{-1}\frak{g}_{01})\frak{s}_{11}\frak{g}_{1\sigma},
\end{align}
for $\kappa, \lambda=0,1$ and $\tau, \sigma=2,3$.

Finally, we can calculate the inverse of the blocks given by the equations (\ref{invqm2}) by applying again the method of Gel'fand to obtain the blocks $Y_{11}$ and $Y_{22}$, with which we obtain the rest by means of the matrix multiplication. Therefore the components of the right inverse read:
\be\label{qminversecomp}
{Y} = \left( {\begin{array}{*{20}{cc|cc}}
{{{\frak y}^{00}}}&{{{\frak y}^{01}}}&{{{\frak y}^{02}}}&{{{\frak y}^{03}}}\\
{{{\frak y}^{10}}}&{{{\frak y}^{11}}}&{{{\frak y}^{12}}}&{{{\frak y}^{13}}}\\
\hline
{{{\frak y}^{20}}}&{{{\frak y}^{21}}}&{{{\frak y}^{22}}}&{{{\frak y}^{23}}}\\
{{{\frak y}^{30}}}&{{{\frak y}^{31}}}&{{{\frak y}^{32}}}&{{{\frak y}^{33}}}
\end{array}} \right) = \left( {\begin{array}{*{20}{cc|cc}}
\frak{y}^{00}& -\frak{l}_{00}^{-1}\frak{l}_{01}\frak{y}^{11}&\frak{N}_{1}\frak{y}^{22}&\frak{N}_{2}\frak{y}^{33}\\
-\frak{l}_{11}^{-1}\frak{l}_{10}\frak{y}^{00}& \frak{y}^{11} &\frak{N}_{3}\frak{y}^{22}&\frak{N}_{4}\frak{y}^{33}\\
\hline
\frak{R}_{1}\frak{y}^{00}&\frak{R}_{2}\frak{y}^{11}&{{{\frak y}^{22}}}&-\frak{l}_{22}^{-1}\frak{l}_{23}\frak{y}^{33}\\
\frak{R}_{3}\frak{y}^{00}&\frak{R}_{4}\frak{y}^{11}&-\frak{l}_{33}^{-1}\frak{l}_{32}\frak{y}^{22}&{{{\frak y}^{33}}}
\end{array}} \right),
\ee
where we have defined the quantum operators $\frak{N}_{i}$ as:
\begin{align}\nonumber
\frak{N}_{1}&=-(\frak{s}_{00}\frak{g}_{02}-\frak{g}_{00}^{-1}\frak{g}_{01}\frak{s}_{11}\frak{g}_{12})+(\frak{s}_{00}\frak{g}_{03}-\frak{g}_{00}^{-1}\frak{g}_{01}\frak{s}_{11}\frak{g}_{13})\frak{l}_{33}^{-1}\frak{l}_{32}, \\\label{Noperators}
\frak{N}_{2}&=-(\frak{s}_{00}\frak{g}_{03}-\frak{g}_{00}^{-1}\frak{g}_{01}\frak{s}_{11}\frak{g}_{13})+(\frak{s}_{00}\frak{g}_{02}-\frak{g}_{00}^{-1}\frak{g}_{01}\frak{s}_{11}\frak{g}_{12})\frak{l}_{22}^{-1}\frak{l}_{23}, \\\nonumber
\frak{N}_{3}&=-(\frak{s}_{11}\frak{g}_{12}-\frak{g}_{11}^{-1}\frak{g}_{10}\frak{s}_{00}\frak{g}_{02})+(\frak{s}_{11}\frak{g}_{13}-\frak{g}_{11}^{-1}\frak{g}_{10}\frak{s}_{00}\frak{g}_{03})\frak{l}_{33}^{-1}\frak{l}_{32}, \\\nonumber
\frak{N}_{4}&=-(\frak{s}_{11}\frak{g}_{13}-\frak{g}_{11}^{-1}\frak{g}_{10}\frak{s}_{00}\frak{g}_{03})+(\frak{s}_{11}\frak{g}_{12}-\frak{g}_{11}^{-1}\frak{g}_{10}\frak{s}_{00}\frak{g}_{02})\frak{l}_{22}^{-1}\frak{l}_{23}, 
\end{align}
and $\frak{R}_{i}$ as
\begin{align}\nonumber
\frak{R}_{1}&=-(\frak{s}_{22}\frak{g}_{20}-\frak{g}_{22}^{-1}\frak{g}_{23}\frak{s}_{33}\frak{g}_{30})+(\frak{s}_{22}\frak{g}_{21}-\frak{g}_{22}^{-1}\frak{g}_{23}\frak{s}_{33}\frak{g}_{31})\frak{l}_{11}^{-1}\frak{l}_{10}, \\\label{Roperators}
\frak{R}_{2}&=-(\frak{s}_{22}\frak{g}_{21}-\frak{g}_{22}^{-1}\frak{g}_{23}\frak{s}_{33}\frak{g}_{31})+(\frak{s}_{22}\frak{g}_{20}-\frak{g}_{22}^{-1}\frak{g}_{23}\frak{s}_{33}\frak{g}_{30})\frak{l}_{00}^{-1}\frak{l}_{01}, \\\nonumber
\frak{R}_{3}&=-(\frak{s}_{33}\frak{g}_{30}-\frak{g}_{33}^{-1}\frak{g}_{32}\frak{s}_{22}\frak{g}_{20})+(\frak{s}_{33}\frak{g}_{31}-\frak{g}_{33}^{-1}\frak{g}_{32}\frak{s}_{22}\frak{g}_{21})\frak{l}_{11}^{-1}\frak{l}_{10}, \\\nonumber
\frak{R}_{4}&=-(\frak{s}_{33}\frak{g}_{31}-\frak{g}_{33}^{-1}\frak{g}_{32}\frak{s}_{22}\frak{g}_{21})+(\frak{s}_{33}\frak{g}_{30}-\frak{g}_{33}^{-1}\frak{g}_{32}\frak{s}_{22}\frak{g}_{20})\frak{l}_{00}^{-1}\frak{l}_{01}, 
\end{align}
and where the diagonal independent components of the inverse (\ref{qminversecomp}) are given, as stated earlier, by:
\begin{align}\nonumber
\frak{y}^{00}&=(\frak{l}_{00}-\frak{l}_{01}\frak{l}_{11}^{-1}\frak{l}_{10})^{-1}, &  & \frak{y}^{22}=(\frak{l}_{22}-\frak{l}_{23}\frak{l}_{33}^{-1}\frak{l}_{32})^{-1},\\\label{indepQMcomp}
\frak{y}^{11}&=(\frak{l}_{11}-\frak{l}_{10}\frak{l}_{00}^{-1}\frak{l}_{01})^{-1},
&  & \frak{y}^{33}=(\frak{l}_{33}-\frak{l}_{32}\frak{l}_{22}^{-1}\frak{l}_{23})^{-1},
\end{align}
Note that it is easy to show by a fairly straightforward calculation, that $\frak{g}_{\mu\rho}{(\frak y^{T})}^{\rho\nu}=\delta_{\mu}^{\nu}$ and by (\ref{qminversecomp}) it is clearly seen that the inverse of the metric is not symmetric in its entries while the metric $G$ itself is. Thus to derive the left action of $Y$  on $G$, {\it i.e.}
${Y^{T} G} =\bigl(\begin{smallmatrix}
       I_2 & 0 \\ 0 & I_2
       \end{smallmatrix}\bigr)$, 

we need to take the transpose of the matrix blocks in $Y$ and transpose each of these blocks. Explicitly:
 \be\label{invqm7}
{Y^T}=\left( {\begin{array}{*{20}{c|c}}
{{Y^{T}_{11}}}&{{Y^{T}_{21}}}\\
\hline
{{Y^{T}_{12}}}&{{Y^{T}_{22}}}
\end{array}} \right) = \left( {\begin{array}{*{20}{cc|cc}}
{{{(\frak y^{T})}^{00}}}&{{(\frak y^{T})}^{10}}&{{(\frak y^{T}})^{20}}&{{{(\frak y^{T}})}^{30}}\\
{{{(\frak y^{T})^{01}}}}&{{{(\frak y^{T})}^{11}}}&{{{(\frak y^{T})}^{21}}}&{{{(\frak y^{T})}^{31}}}\\
\hline
{{{(\frak y^{T})}^{02}}}&{{{(\frak y^{T})}^{12}}}&{{{(\frak y^{T})}^{22}}}&{{{(\frak y^{T}})^{32}}}\\
{{{(\frak y^{T}) }^{03}}}&{{{(\frak y^{T})}^{13}}}&{{{(\frak y^{T})}^{23}}}&{{{(\frak y^{T})}^{33}}}
\end{array}} \right),
\ee
and where in $(\frak y^{T})^{\rho\nu}$ the $(T)$ upper index means the transpose of the entries in the components of the inverse in (\ref{qminversecomp}) while the $(\rho\nu)$-Leibnitz indices refer to the column and row position, respectively, of the inverse in that matrix.\\
As an illustration of how the inverse of the elements of the metric matrix act on the Lorentzian indices of a tensor, consider a covariant two-tensor $R_{\mu\nu}$ which has been defined. To construct a mixed tensor from it, start with the intrinsic equation $R_{\mu\nu}=:g_{\mu\sigma}R^{\sigma}_{\;\nu }$. Using now the inverse metric components displayed in (\ref{invqm7}) and acting with these from the left on both sides of the previous definition we get
\begin{equation}\label{inv6}
 ( \frak  y^{T})^{\lambda\mu}R_{\mu \nu} =( \frak  y^{T})^{\lambda\mu}g_{\mu\sigma}R^{\sigma}_{\;\nu }=\delta ^{\lambda}_{\sigma}R^{\sigma}_{\;\nu }=R^{\lambda}_{\;\nu }.
\end{equation}
Note, by analogy, that 
\be\label{inv7}
R_{\mu\lambda}\;\frak  y^{\lambda\nu}=R_{\mu}^{\:\;\sigma} g_{\sigma\lambda}\;\frak  y^{\lambda\nu}=R_{\mu}^{\;\:\sigma}\delta_{\sigma}^{\nu}=R_{\mu}^{\;\:\nu}.
\ee

 In the next section we shall implement the above results in the consideration of the Pseudo-Riemannian tensor associated with our algebra.

\section{ Noncommutative Riemannian Geometry from the Metric Central Bimodule}

 From the theory of Derivations and Noncommutative Differential Calculus \cite{connes1}, \cite{dubois1},\cite{dubois2},\cite{landi},\cite{jacobson} we have that the Leibnitz rule action of the linear mapping 
$(\boldsymbol\nabla ,\hat X_{i})\mapsto \nabla_{ X_{i}}$ is a linear connection endomorphism (from here on, without risk of confusion and in order to simplify our notation, we shall denote the  associative enveloping universal algebra with unit $(\mathcal U(\mathcal A))$ simply by $\mathcal A$).  In particular, making use of (\ref{z-mod1}) \be\label{covder1}
\nabla_{ X_{i}} (X_{\alpha}\; \psi) = D_i X_{\alpha} \;\psi + X_{\alpha}\; \nabla_{ X_{i}}(\psi), \qquad \forall \; X_{\alpha} \in{\mathcal A} , 
\ee
where $\psi$ is a left $\mathcal A$-module,  we can equally consider
\be\label{covder2}
\nabla_{ X_{\rho}} (X_{\alpha}\; \psi) = \hat X_{\rho} (X_{\alpha})\;\psi + X_{\alpha}\; \nabla_{ X_{\rho}}(\psi), \qquad \forall \; X_{\alpha} \in{\mathcal A}.  
\ee
Now, when acting with this covariant derivation on the metric two-form ${\frak g}:= {\frak g}_{\mu\nu}(\mathcal A)\; \omega^\mu \otimes \omega^{\nu} \in \Omega^{2}(\mathcal A))$
the latter is projected into the subspace of {\bf Outer}-derivations so
\be\label{linder1}
\nabla_{ X_{\rho}} \big({\frak g}_{\mu\nu}(\mathcal A)\omega^\mu \otimes\omega^\nu\big)= D_{ X_{\rho}}\big({\frak g}_{\mu\nu}( \mathcal A)\big)\omega^\mu \otimes\omega^\nu + 
{\frak g}_{\mu\nu}(\mathcal A)\nabla_{X_{\rho}}(\omega^\mu \otimes\omega^\nu),
\ee
where the differential subalgebra $\Omega_{Out}$ is determined by  (\ref{3.5}). Thus
 $D_{ X_\rho}\in Out(\mathcal A)= Der(\mathcal A)/{\mathcal Int({\mathcal A}) }$ is the outer-derivation associated with the Lie algebra basis $X_\rho$, as defined in Sec.(3.1).

Note also that because we are using the directional covariant derivation we do not require a flip factor in (\ref{linder1}) to preserve operator ordering. Hence, in particular, we have from (\ref{covder2})
  \begin{align}\label{linder2}
\nabla_{X_{\rho}}( X_{\nu}\;\omega^\mu)=D_{ X_{\rho}} ( X_{\nu})\;\omega^\mu  + X_{\nu} \nabla_{ X_{\rho}}(\omega^\mu). 
  \end{align}
Now, by linearity, we can set \;\;\;
\be\label{cons3d}
\nabla_{X_{\rho}}\;\omega^{\mu}=-\Gamma_{\rho\sigma}^{\mu}\; \omega^\sigma,
\ee
where, in order to preserve stability in the
$\Omega_D$ cochains we require that $\Gamma_{\nu\sigma}^{\mu}$ be in the center $\mathcal (Z(\mathcal A))$ of the algebra, as it can be easily shown that $Der({\mathcal A})$ maps $\mathcal Z(\mathcal A)$ into itself \cite{Landi2}, so $ D_{X_{\nu}}( \mathcal Z(\mathcal A))\in \mathcal Z(\mathcal A)$ and $Der{\mathcal Z(\mathcal A)}\subset{\mathcal Out(\mathcal A)}$.

To further specify these connection symbols and relate to the Levi-Civita connection of Pseudo-Riemannian Geometry, we next consider the following two properties which will  result in a noncommutative generalization  of  :\\
 1){\bf Metricity}. The covariant derivative of the metric tensor ${\frak g}$ is required to vanish, so that 
 \be\label{met}
  \nabla_{ X_{\rho}} \big( {\frak g}_{\mu,\nu} \; \omega^{\mu}\otimes\omega^{\nu}\big)=  0.
\ee

2) {\bf Zero torsion}.
Given a connection $\nabla_{ X_{\rho}} $ on $\Omega_{Der} (\mathcal A)$, torsion is defined as the bimodule homomorphism $T: \Omega^{1}_{Der} (\mathcal A)\to\Omega^{2}_{Der} (\mathcal A) $
 by setting \cite{D-V3}:
\be\label{tor1}
T(\omega)(\hat X_{\alpha} , \hat X_{\beta})= d\omega(\hat X_{\alpha}, \hat X_{\beta})-\nabla_{X_{\alpha}}( \omega)(\hat X_{\beta})+\nabla_{X_{\beta}}( \omega)(\hat X_{\alpha}),\quad \hat X_{\alpha} , \hat X_{\beta}\in {Der} (\mathcal A)
\ee

In order to further solve (\ref{tor1}) for zero torsion, recall first that the differential $d$ of $\Omega^{n}_D$ in (\ref{extd}) can be expressed equivalently as 
\begin{align}\label{tora}
(d\omega)(\hat X_1, \dots , \hat X_{n+1})=& \sum^{n+1}_{k=1} (-1)^{k+1} \hat X_{k}\; \omega(\hat X_{1},\dots, \check X_k, \dots, \hat X_{n+1})\nonumber\\
&+\sum_{k<l}^{n+1} (-1)^{k+l}\; \omega([\hat X_k , \hat X_l] ,\hat X_1\dots, \check X_k,\dots \check X_l,\dots \hat X_{n+1}),
\end{align}
from where, in particular,  it follows that 
\be\label{tora2}
(d\omega)(\hat X_\alpha , \hat X_{\beta})=\hat X_{\alpha} \big(\omega(\hat X_{\beta})\big)-\hat X_\beta \big(\omega(\hat X_\alpha)\big)-\omega([\hat X_\alpha,  \hat X_{\beta}]).
\ee
Writing now the general 1-form as $\omega= \sum_{\sigma}X_{\sigma} \omega^{\sigma}$ and making use of the duality $\omega^{\sigma}(\hat X_\gamma)= \delta^{\sigma}_{\gamma}$ and (\ref{subder4}), yields
\be\label{tora3}
\omega(\hat X_\beta)=i_{\hat X_\beta}\triangleright(\sum_{\sigma}X_{\sigma} \omega^{\sigma})=X_{\beta},
\ee
and 
\be\label{tora4}
\hat X_\alpha \big(\omega(\hat X_\beta) \big)=D_{X_{\alpha}}X_\beta= \mathcal{N}^{\sigma}_{\alpha\beta}\big(\mathcal {Z(U(A))}\big) X_{\sigma}.
\ee
Moreover, recalling now (\ref{z-mod1}), we get
\be\label{tora5}
\omega([\hat X_\alpha, \hat X_\beta])=\omega\big (B^{\lambda}_{[\alpha,\beta]}\hat X_{\lambda})=B^{\sigma}_{[\alpha,\beta]}X_\sigma .
\ee
As a next step consider the covariant derivative terms in (\ref{tor1}) by making use of (\ref{cons3d}). By a similar procedure as before we get
\be\label{tora6}
\nabla_{X_\alpha}(\omega)(\hat X_\beta)=[D_{X_{\alpha}}X_{\beta}-X_{\sigma}\Gamma^{\sigma}_{\alpha\beta}].
\ee
Hence replacing the terms  (\ref{tora4})and (\ref{tora5}) we can calculate the exterior derivative in (\ref{tora2}) and substituting these together with (\ref{tora6}) in (\ref{tor1}) 
results in 
\be\label{tor3}
(\Gamma^\sigma_{\mu\nu} -\Gamma^\sigma_{\nu\mu})=  B_{[\mu\nu]}^{\sigma}\big(\mathcal {Z(U(A))}\big) .\\
\ee
Consequently zero torsion implies that the antisymmetric part of the symbols $\Gamma_{\nu\sigma}^{\mu}(\mathcal Z(\mathcal A))$ are determined by the $(B_{[\alpha, \beta]}^{\sigma})$ in (\ref{z-mod1}).

All this, of course, reflects the fact that while in ordinary differential geometry the Levi-Civita connection is uniquely determined when torsion is set to zero and metricity is satisfied, this is not so for the case of noncommutativity, as may be inferred from the discussion above.

We also consider interesting to remark at this point the relation between Lie algebras on fibers of Principal Fiber Bundles and covariant derivative diffeomorphisms on their base spaces and the corresponding Inner and Outer derivations in Noncommutative Geometry, as described above. This may also be seen when relating torsion in the noncommutative context to the perhaps more familiar definition in the classical differential geometry \cite{D-V3} by identifying it with 
the $\mathcal Z(\mathcal A)$-bilinear antisymmetric mapping  $T:Der(\mathcal A)\times Der(\mathcal A)\to Der(\mathcal A)$ as:
\be\label{tor4}
T(\hat X_{\mu},\hat X_{\nu})=\nabla_{X_{\mu}}(\hat X_{\nu})-\nabla_{X_{\nu}}(\hat X_{\mu})-[\hat X_{\mu},\hat X_{\nu}], \qquad \forall \hat X_{\mu},\hat X_{\nu} \in Der(\mathcal A),
\ee
and observing  that the metricity condition (\ref{met}) together with zero torsion are equivalent \cite{D-V} to a bilinear mapping  ${\frak g}$ of $Der(\mathcal A)\times Der(\mathcal A)$ into $\mathcal A$ such that:
\be\label{bilin}
 D_{ X_{\kappa}}{\frak g} (\hat X_{\mu}, \hat X_{\nu})=
  {\frak g}\big( \nabla_{ X_{\kappa}} (\hat X_{\mu}), \hat X_{\nu}\big)+ {\frak g}\big (\hat X_{\mu}, \nabla_{ X_{\kappa}}(\hat X_{\nu})\big).
\ee
 Therefore,  summing over the cyclic permutations of the later, while making use of (\ref{tor4}) and the symmetry 
of the metric components, one arrives at the following noncommutative expression for the connection symbols:
\begin{align}\label{L-C}
2{\frak g}(\nabla_{X_{\mu}}(\hat X_{\nu}),\hat X_{\kappa})=&\hat X_{\mu}({\frak g}(\hat X_{\nu},\hat X_{\kappa})) + \hat X_{\nu}({\frak g}(\hat X_{\mu},\hat X_{\kappa}))
-\hat X_{\kappa}({\frak g}(\hat X_{\mu},\hat X_{\nu}))+{\frak g} ([\hat X_{\mu}, \hat X_{\nu}], \hat X_{\kappa})\nonumber\\
&-{\frak g} ([\hat X_{\nu}, \hat X_{\kappa}], \hat X_{\mu})+{\frak g} ([\hat X_{\kappa}, \hat X_{\mu}], \hat X_{\nu}).
\end{align}

The above is formally  analogous to the one resulting in ordinary differential geometry based on a non-coordinate basis. Note however that the derivations here are Outer-derivations.

 Hence in order to proceed further in determining the connection symbols in these equations within the Levi-Civita context, we recall first that they are valued in the center of the algebra
$\mathcal Z(\mathcal A)$ and so are their derivations. Thus, making use of (\ref{subder4}) where $\mathcal{N}^{\sigma}_{\rho \mu}\in  {\mathcal{Z}}(\mathcal{A})$, so that
\begin{align}\label{derx0}
D_{X_{\rho}} X_{\mu} =&\mathcal{N}_{\rho\mu}^{\lambda}X_{\lambda},
\end{align}
 and by further substituting (\ref{derx0}) into
\begin{align}\label{bilin2}
D_{X_{\rho}}{\mathcal O_{1}}=\xi_{3}D_{X_{\rho}}X_0-\xi_{0}D_{X_{\rho}}X_3 ,
\end{align}
we obtain the relations
\begin{subequations} 
\begin{align}\label{4.23a}
\mathcal{N}_{\rho3}^{i}&=\xi_{3}\xi_{0}^{-1}\mathcal{N}_{\rho0}^{i},\;\; i=1,2,\\\label{4.23b}
\phi_{\rho}&= \mathcal{N}_{\rho0}^{0}-\xi_{0}\xi_{3}^{-1}\mathcal{N}_{\rho3}^{0}=
 \mathcal{N}_{\rho3}^{3}-\xi_{3}\xi_{0}^{-1}\mathcal{N}_{\rho0}^{3},
\end{align}
\end{subequations}
from where
\be\label{bilin4}
D_{X_{\rho}}{\mathcal O_{1}}= \phi_{\rho}(\mathcal Z(\mathcal A)){\mathcal O_{1}}. 
 \ee
 Next, in analogy with (\ref{bilin2}) we have
\begin{equation}\label{4.24}
D_{X_{\rho}}{\mathcal O_{2}}=D_{X_{\rho}}\big((X_1)^2 +(X_2)^2\big)=\sum_{k=1}^{2} \big (D_{X_{\rho}} X_k\big) X_k+\sum_{k=1}^{2} X_{k} \big (D_{X_{\rho}} X_k\big), 
\end{equation}
that yields the relations
\be\label{4.25}
 \mathcal{N}^{1}_{\rho1}=\mathcal{N}^{2}_{\rho2}, \;\; \mathcal{N}^{2}_{\rho1}=-\mathcal{N}^{1}_{\rho2},\;\;\mathcal{N}^{0}_{\rho1}=\mathcal{N}^{3}_{\rho1}
 =\mathcal{N}^{0}_{\rho2}=\mathcal{N}^{3}_{\rho2}=0,
\ee
 so (\ref{derx0}) for $\mu=1,2$ becomes
\begin{align}\label{der8}
D_{X_{\rho}} X_{1} =\mathcal{N}_{\rho1}^{1}X_1+\mathcal{N}_{\rho1}^{2}X_2, \qquad\qquad 
D_{X_{\rho}} X_{2} =-\mathcal{N}_{\rho1}^{2}X_1+\mathcal{N}_{\rho1}^{1}X_2,
\end{align}
from where 
\begin{equation}\label{der9}
D_{X_{\rho}}{\mathcal O_{2}}=2\mathcal{N}_{\rho1}^{1}{\mathcal O_{2}}.
\end{equation}
Moreover, from the Leibnitz rule and, by acting with the derivations on the algebra commutator ({\it i.e.} on  $[X_{\mu},X_{\nu}]=\mathcal{C}_{\mu\nu}^{\lambda}X_{\lambda}$) we have that 
\begin{align}\label{leibnitzrule1}
[X_{\mu},  D_{X_{\rho}} X_{\nu}]+[D_{X_{\rho}}X_{\mu},X_{\nu}]=\mathcal{C}_{\mu\nu}^{\lambda}
D_{X_{\rho}}X_{\lambda},
\end{align}
which when making use again of (\ref{derx0}) leads to the conditions
\be\label{leib5}
C^{\tau}_{\mu\sigma}\mathcal{N}^{\sigma}_{\rho\nu}+
C^{\tau}_{\sigma\nu}\mathcal{N}^{\sigma}_{\rho\mu}-C^{\sigma}_{\mu\nu}\mathcal{N}^{\tau}_{\rho\sigma}=0.
\ee
 These, in turn, imply that $\mathcal{N}_{\rho0}^{3}=-\xi_{0}\xi_{3}^{-1}\mathcal{N}_{\rho0}^{0}$ and $\mathcal{N}_{\rho3}^{3}=-\xi_{0}\xi_{3}^{-1}\mathcal{N}_{\rho3}^{0}$. Note however that when substituting these relations into
the second equality in (\ref{4.23b}) results in its first equality. Thus the independent relations from those two sets are 
\be\label{leib3a}
\mathcal{N}^{3}_{\rho\tau}=-\xi_{0}\xi_{3}^{-1}
\mathcal{N}^{0}_{\rho\tau},\;\; \tau=0,3.
\ee
Therefore for $\mu=0,3$ the equation (\ref{derx0}) takes the form
\begin{align}\nonumber
D_{X_{\rho}}X_{0}&=\xi_{3}^{-1}\mathcal{O}_{1}\mathcal{N}_{\rho0}^{0}+
\mathcal{N}_{\rho0}^{1}X_{1}+\mathcal{N}_{\rho0}^{2}X_{2},\\\label{derx0-1} D_{X_{\rho}}X_{3}&=\xi_{3}^{-1}\mathcal{O}_{1}\mathcal{N}_{\rho3}^{0}+
\xi_{3}\xi_{0}^{-1}(\mathcal{N}_{\rho0}^{1}X_{1}+\mathcal{N}_{\rho0}^{2}X_{2})
\end{align}

Finally, it would be apparently reasonable that additional relations would result from a double commutator polynomial of the form
$$[{X}_{\mu},[{X}_{\nu},D_{{X}_{\rho}}{X}_{\lambda}]]+[{X}_{\nu},[D_{{X}_{\rho}}{X}_{\lambda},{X}_{\mu}]]+[D_{{X}_{\rho}}{X}_{\lambda},[{X}_{\mu},{X}_{\nu}]].$$
Nonetheless  making use once more of (\ref{derx0}) and noting that
\begin{align}\label{1}
[{X}_{\mu},[{X}_{\nu},D_{{X}_{\rho}}{X}_{\lambda}]]=[{X}_{\mu},[{X}_{\nu},\mathcal{N}_{\rho\lambda}^{\sigma}{X}_{\sigma}]]=\mathcal{N}_{\rho\lambda}^{\sigma}[{X}_{\mu},[{X}_{\nu},{X}_{\sigma}]]
\end{align} 
and observing also that the commutator in the second equality has to satisfy the Jacobi identity, we have that
\begin{align*}
\mathcal{N}_{\rho\lambda}^{\sigma}[{X}_{\mu},[X_{\nu},{X}_{\sigma}]]&=-\mathcal{N}_{\rho\lambda}^{\sigma}([{X}_{\nu},[{X}_{\sigma},{X}_{\mu}]]+[{X}_{\sigma},[{X}_{\mu},{X}_{\nu}]])\\
&=-[{X}_{\nu},[\mathcal{N}_{\rho\lambda}^{\sigma}{X}_{\sigma},{X}_{\mu}]]-[\mathcal{N}_{\rho\lambda}^{\sigma}{X}_{\sigma},[{X}_{\mu},{X}_{\nu}]]\\
&=-[{X}_{\nu},[D_{\hat{X}_{\rho}}{X}_{\lambda},{X}_{\mu}]]-[D_{{X}_{\rho}}{X}_{\lambda},[{X}_{\mu},{X}_{\nu}]].
\end{align*}   
Hence
\be\label{der10}
[{X}_{\mu},[{X}_{\nu},D_{{X}_{\rho}}{X}_{\lambda}]]+[{X}_{\nu},[D_{{X}_{\rho}}{X}_{\lambda},{X}_{\mu}]]+[D_{{X}_{\rho}}{X}_{\lambda},[{X}_{\mu},{X}_{\nu}]]=0,
\ee
and this in turn implies the condition 
\be\label{relfin}
\mathcal{N}^{\sigma}_{\rho\lambda}\Big ( C^{\kappa}_{\nu\sigma}  C^{\tau}_{\mu\kappa}+  C^{\kappa}_{\sigma\mu} C^{\tau}_{\nu\kappa} + C^{\kappa}_{\mu\nu}  C^{\tau}_{\sigma\kappa}\Big)=0,
\ee
which as an identity leads to no additional relations between the components of the $\mathcal{N}^{\lambda}_{\rho\mu}$.

We still have one more condition that comes from the explicit application of (\ref{z-mod1}) over the algebra elements $X_{\lambda}$ by means of (\ref{derx0}) which results in
\begin{equation}\label{constriction-1}
D_{X_{\mu}}(\mathcal{N}_{\nu\lambda}^{\tau})-D_{X_{\nu}}(\mathcal{N}_{\mu\lambda}^{\tau})+\mathcal{N}_{\nu\lambda}^{\sigma}
\mathcal{N}_{\mu\sigma}^{\tau}-\mathcal{N}_{\mu\lambda}^{\sigma}
\mathcal{N}_{\nu\sigma}^{\tau}=B_{[\mu,\nu]}^{\sigma}\mathcal{N}_{\sigma\lambda}^{\tau}.
\end{equation}
It is fairly straightforward to obtain the conditions implied by this equation by making use of (\ref{4.25}), (\ref{leib3a}) and (\ref{B10b}), we thus get
\begin{subequations}
\begin{align}\label{const6a}
&D_{X_{\mu}}\big(\mathcal {N}_{\nu 0}^{0}\big)-D_{X_{\nu}}\big(\mathcal {N}_{\mu 0}^{0})=B^{\sigma}_{[\mu,\nu]}\mathcal {N}_{\sigma 0}^{0},\\\label{const6b}
&D_{X_{\mu}}(\mathcal{N}_{\nu0}^{1})-D_{X_{\nu}}(\mathcal{N}_{\mu0}^{1})+\mathcal{N}_{\nu0}^{1}\mathcal{N}_{\mu1}^{1}+\mathcal{N}_{\nu0}^{2}\mathcal{N}_{\mu2}^{1}-\mathcal{N}_{\mu0}^{1}\mathcal{N}_{\nu1}^{1}
-\mathcal{N}_{\mu0}^{2}\mathcal{N}_{\nu2}^{1}=B_{[\mu,\nu]}^{\sigma}\mathcal{N}_{\sigma0}^{1},\\\label{const6c}
&D_{X_{\mu}}(\mathcal{N}_{\nu1}^{1})-D_{X_{\nu}}(\mathcal{N}_{\mu1}^{1})=B_{[\mu,\nu]}^{\sigma}\mathcal{N}_{\sigma1}^{1},\\\label{const6d}
&D_{X_{\mu}}(\mathcal{N}_{\nu2}^{1})-D_{X_{\nu}}(\mathcal{N}_{\mu2}^{1})=B_{[\mu,\nu]}^{\sigma}\mathcal{N}_{\sigma2}^{1},\\\label{const6e}
&D_{X_{\mu}}(\mathcal{N}_{\nu0}^{2})-D_{X_{\nu}}(\mathcal{N}_{\mu0}^{2})+\mathcal{N}_{\nu0}^{1}\mathcal{N}_{\mu1}^{2}+\mathcal{N}_{\nu0}^{2}\mathcal{N}_{\mu2}^{2}-\mathcal{N}_{\mu0}^{1}\mathcal{N}_{\nu1}^{2}
-\mathcal{N}_{\mu0}^{2}\mathcal{N}_{\nu2}^{2}=B_{[\mu,\nu]}^{\sigma}\mathcal{N}_{\sigma0}^{2}.
\end{align}
\end{subequations}
We can now apply the above results and expressions to the metricity condition in (\ref{met}) to get explicit expressions for the action of the derivations on the metric components previously displayed in (\ref{qmet}). Thus recalling 
(\ref{linder1}) we have
 \begin{align}\label{met5}
 0=\nabla_{ X_{\rho}} \big( {\frak g}_{\mu\nu} \; \omega^{\mu}\otimes\omega^{\nu}\big)&=(D_{X_{\rho}}{\frak g}_{\mu\nu})\omega^\mu \otimes \omega^{\nu}+{\frak g}_{\mu\nu}\nabla_{ X_{\rho}}\omega^\mu \otimes \omega^{\nu}+ {\frak g}_{\mu\nu}\omega^\mu \otimes \nabla_{ X_{\rho}}\omega^{\nu},
\end{align}
and further making use of (\ref{cons3d}) we get:
\be\label{met6}
D_{X_{\rho}}{\frak g}_{\mu\nu}- 
{\frak g}_{\sigma\nu}\Gamma^{\sigma}_{\rho\mu}-{\frak g}_{\mu\sigma}\Gamma^{\sigma}_{\rho\nu}=0.
\ee
From here we can get an explicit expression for the connection symbols by first taking cyclic permutations of the free three lower indices on the terms in (\ref{met6}), and making use of (\ref{tor3}), yields
\be\label{met7}
2{\frak g}_{\sigma\nu}\Gamma^{\sigma}_{\rho\mu}= D_{X_{\rho}}{\frak g}_{\mu\nu}+D_{X_{\mu}}{\frak g}_{\rho\nu}-D_{X_{\nu}}{\frak g}_{\rho\mu}-{\frak g}_{\sigma\nu}B_{[\mu\rho]}^{\sigma}+
{\frak g}_{\sigma\mu}B_{[\nu\rho]}^{\sigma}+{\frak g}_{\rho\sigma}B_{[\nu\mu]}^{\sigma}.
\ee
In addition to the fact that this result makes more evident our previous remarks on the equivalence between the formalism leading to (\ref{L-C}) and the above approach, we could arrive at a more manageable approach for considering explicit scenarios for 
calculating the 64 $\Gamma^{\mu}_{\rho\kappa}$'s; either by making use of the left or right inverse metric matrix discussed in Section (3.3) or by an alternate approach that would make use of the previously derived expressions of the metric components as shown in (\ref{qmet}), together with the equations for the derivations in (\ref{der8}), (\ref{derx0-1}), followed by a match of the resulting monomials on each side of (\ref{met6}).

Furthermore, note that in order to preserve the metricity condition we need the derivations of the metric components explicitly. We can obtain them by using the derivations (\ref{der8}) and (\ref{derx0-1}), followed by a match of the resulting monomials on each side of (\ref{met6}) (see the Appendix B). This results in the following relations that the connection symbols must satisfy:  
\begin{subequations}
\begin{align}\label{gammas1-1}
&\Gamma_{\rho1}^{0}=\Gamma_{\rho2}^{0}=0,\\\label{gammas2-1}
&\Gamma_{\rho1}^{1}=\Gamma_{\rho2}^{2}=\mathcal{N}_{\rho1}^{1},\\\label{gammas3-1}
&\Gamma_{\rho3}^{1}=\xi_{3}\xi_{0}^{-1}\Gamma_{\rho0}^{1}=\xi_{3}\xi_{0}^{-1}\mathcal{N}_{\rho0}^{1},\\\label{gammas4-1}
&\Gamma_{\rho1}^{2}=-\Gamma_{\rho2}^{1}=\mathcal{N}_{\rho1}^{2},\\\label{gammas5-1}
&\Gamma_{\rho3}^{2}=\xi_{3}\xi_{0}^{-1}\Gamma_{\rho0}^{2}=\xi_{3}\xi_{0}^{-1}\mathcal{N}_{\rho0}^{2},\\\label{gammas6-1}
&\Gamma_{\rho0}^{3}=-\xi_{0}\xi_{3}^{-1}\Gamma_{\rho0}^{0}=-\xi_{0}\xi_{3}^{-1}l_{2}\mathcal{O}_{1}\mathcal{N}_{\rho0}^{0},\\\label{gammas7-1}
&\Gamma_{\rho1}^{3}=\Gamma_{\rho2}^{3}=0,\\\label{gammas8-1}
&\Gamma_{\rho3}^{3}=-\xi_{0}\xi_{3}^{-1}\Gamma_{\rho3}^{0}=-\xi_{0}\xi_{3}^{-1}l_{3}\mathcal{O}_{1}\mathcal{N}_{\rho0}^{0},
\end{align}
\end{subequations}
with the additional constraint on the so far undetermined coefficients of the metric components:
\begin{subequations}
\begin{align}\label{constriction1}
&a_{1}=\xi_{0}\xi_{3}^{-1}b_{1}=\xi_{0}^{2}\xi_{3}^{-2}c_{1}.
\end{align}
\end{subequations}

On the other hand, by making use of the torsionless condition (\ref{tor3}) and the equations (\ref{gammas1-1}-\ref{gammas8-1}) derived in the Appendix B, we obtain that the $B's$ are given by:
\begin{subequations}
\begin{align}\label{B1}
B_{[\mu,\rho]}^{1}&=\mathcal{N}_{\mu\rho}^{1}-\mathcal{N}_{\rho \mu}^{1},\qquad \mu\neq \rho,\qquad \mu, \rho=0,1,2,3\\\label{B2}
B_{[\mu,\rho]}^{2}&=\mathcal{N}_{\mu\rho}^{2}-\mathcal{N}_{\rho \mu}^{2},\\\label{B3}
B_{[1,2]}^{3}&=B_{[1,2]}^{0}=0,\\\label{B4}
B_{[0,\rho]}^{3}&=\xi_{0}\xi_{3}^{-1} B_{[\rho,0]}^{0}=\xi_{0}\xi_{3}^{-1}l_{2}\mathcal{O}_{1}
\mathcal{N}_{\rho0}^{0}-\xi_{0}\xi_{3}^{-1}l_{3}\mathcal{O}_{1}
\mathcal{N}_{00}^{0}\delta_{\rho}^{3},\\\label{B5}
B_{[1,\rho]}^{3}&=-\xi_{0}\xi_{3}^{-1}B_{[1,\rho]}^{0}=-\xi_{0}\xi_{3}^{-1}(l_{2}\mathcal{O}_{1}\delta_{\rho}^{0}\mathcal{N}_{10}^{0}+
l_{3}\mathcal{O}_{1}\delta_{\rho}^{3}\mathcal{N}_{10}^{0}),\\\label{B6}
B_{[2,\rho]}^{3}&=-\xi_{0}\xi_{3}^{-1}B_{[2,\rho]}^{0}=-\xi_{0}\xi_{3}^{-1}(l_{2}\mathcal{O}_{1}\delta_{\rho}^{0}\mathcal{N}_{20}^{0}+
l_{3}\mathcal{O}_{1}\delta_{\rho}^{3}\mathcal{N}_{20}^{0}),\\\label{B7}
B_{[3,\rho]}^{3}&=\xi_{0}\xi_{3}^{-1} B_{[\rho,3]}^{0}=\xi_{0}\xi_{3}^{-1}l_{3}\mathcal{O}_{1}
\mathcal{N}_{\rho0}^{0}-\xi_{0}\xi_{3}^{-1}l_{2}\mathcal{O}_{1}
\mathcal{N}_{30}^{0}\delta_{\rho}^{3}.
\end{align}
\end{subequations}
From the above it readily follows that
\begin{align}
\begin{split}
&(\ref{B4})~\Longrightarrow~\mathcal{N}_{00}^{0}=0,\\
&(\ref{B7})~\Longrightarrow~(l_{3}-l_{2})\mathcal{N}_{30}^{0}=0,\\
&(\ref{B10a})~\Longrightarrow~\mathcal{N}_{03}^{0}=0,\\
&(\ref{B10b})~\Longrightarrow~\phi_{\rho}=l_{1}\mathcal{N}_{\rho0}^{0}.
\end{split}
\end{align}
Moreover, since the $\Gamma$'s are neither symmetric nor antisymmetric in the lower indices we make use of (\ref{tor3}) and (\ref{B1}-\ref{B6}) to get for components with indexes exchanged: 
\begin{subequations}
\begin{align}\label{antigammas1}
&\Gamma_{\mu\rho}^{1}=\mathcal{N}_{\mu\rho}^{1},\\\label{antigammas2}
&\Gamma_{\mu\rho}^{2}=\mathcal{N}_{\mu\rho}^{2},\\\label{antigammas3}
&\Gamma_{0\rho}^{3}=\Gamma_{0\rho}^{0}=0,\\\label{antigammas31}
&\Gamma_{1\rho}^{3}=-\Gamma_{1\rho}^{0}=-\xi_{0}\xi_{3}^{-1}(l_{2}
\mathcal{O}_{1}\mathcal{N}_{10}^{0}\delta_{\rho}^{0}+l_{3}
\mathcal{O}_{1}\mathcal{N}_{10}^{0}\delta_{\rho}^{3}),\\\label{antigammas4}
&\Gamma_{2\rho}^{3}=-\Gamma_{2\rho}^{0}=-\xi_{0}\xi_{3}^{-1}(l_{2}
\mathcal{O}_{1}\mathcal{N}_{20}^{0}\delta_{\rho}^{0}+l_{3}
\mathcal{O}_{1}\mathcal{N}_{20}^{0}\delta_{\rho}^{3}),\\\label{antigammas5}
&\Gamma_{3\rho}^{3}=-\xi_{0}\xi_{3}^{-1}\Gamma_{3\rho}^{0}=-\xi_{0}\xi_{3}^{-1}l_{2}
\mathcal{O}_{1}\mathcal{N}_{30}^{0}\delta_{\rho}^{3};
\end{align}
\end{subequations}
these relations, together with
\begin{align}
\begin{split}
&(4.20a)~\Longrightarrow~\mathcal{N}_{\rho3}^{i}=\xi_{3}\xi_{0}^{-1}\mathcal{N}_{\rho 0}^{i},\;\; i=1,2,\\
&(4.20b)~\Longrightarrow~\phi_{\rho}= \mathcal{N}_{\rho0}^{0}-\xi_{0}\xi_{3}^{-1}\mathcal{N}_{\rho3}^{0}=\mathcal{N}_{\rho3}^{3}-\xi_{3}\xi_{0}^{-1}\mathcal{N}_{\rho0}^{3},\\
&(4.25)~\Longrightarrow~\mathcal{N}^{1}_{\rho1}=\mathcal{N}^{2}_{\rho2}, \;\; \mathcal{N}^{2}_{\rho1}=-\mathcal{N}^{1}_{\rho2},\;\;\mathcal{N}^{0}_{\rho1}=\mathcal{N}^{3}_{\rho1}
 =\mathcal{N}^{0}_{\rho2}=\mathcal{N}^{3}_{\rho 2}=0,\\
&(\ref{leib3a})\mathcal{N}^{3}_{\rho\tau}=-\xi_{0}\xi_{3}^{-1}\mathcal{N}^{0}_{\rho\tau},\;\; \tau=0,3,
\end{split}
\end{align}
and (see Appendix B)
\begin{subequations}
\begin{align}\label{els}
l&=\gamma_{3}+2\xi_{0}\xi_{3}^{-1}\gamma_{4},\\\label{13}
l_{1}&=2a_{2}(2a_{2}-\xi_{0}\xi_{3}^{-1}l)^{-1},\\\label{14}
l_{2}(\mathcal{O}_{1})&=\frac{1}{2}(\xi_{3}a_{0}-b_{0}\xi_{0})^{-1}[(\gamma_{1}+2\gamma_{4}\xi_{3}^{-1}\mathcal{O}_{1})l_{1}+a_{1}l(\xi_{0}\xi_{3}^{-1}l-2a_{2})^{-1}],\\\label{15}
l_{3}(\mathcal{O}_{1})&=\frac{1}{2}(\xi_{3}b_{0}-c_{0}\xi_{0})^{-1}[(\kappa_{1}+2\kappa_{4}\xi_{3}^{-1}\mathcal{O}_{1})l_{1}+\xi_{0}^{-2}\xi_{3}^{2}a_{1}l(\xi_{0}\xi_{3}^{-1}l-2a_{2})^{-1}],
\end{align}
\end{subequations}
which were previously derived along this section and summarized here, constitute the basic material required for our next discussion.

\subsection{The Pseudo-Riemannian Curvature of $\nabla_{X}$}
 	
Recall now that in a general basis the Pseudo-Riemann curvature of $\nabla$ is the bilinear antisymmetric mapping   
\be\label{curv11}
(\hat X_\mu ,\hat X_\nu )\mapsto R(\hat X_\mu , \hat X_\nu)= (\nabla_{X_\mu}\circ \nabla_{ X_\nu} - \nabla_{ X_\nu}\circ \nabla_{ X_\mu}-\nabla_{[ D_{X_\mu}, D_{ X_\nu}] }),
\ee
where $\nabla_{X_{\rho}}$ is as defined in (\ref{linder2}).
  	
The Pseudo-Riemann Curvature tensor is the given by
\be\label{riem1}
R(\hat X_\mu ,\hat X_\nu, \omega^\sigma, \hat X_\rho ):= i_{\hat X_\rho }\triangleright\big(R(\hat X_\mu ,\hat X_\nu)\triangleright\omega^\sigma \big) .
\ee
When substituting the covariant derivation (\ref{cons3d}) and (\ref{tor3}) in the above expression it immediately follows that the quantum Non-Commutative Pseudo-Riemannian tensor components are given by
\be\label{riem2}
R_{\rho\nu\mu}^{\;\sigma}:=- D_{ X_\mu}(\Gamma_{\nu\rho}^{\sigma}) + D_{ X_\nu}(\Gamma_{\mu\rho}^{\sigma}) - \Gamma_{\nu\rho}^{\lambda}\Gamma_{\mu\lambda}^{\sigma}+\Gamma_{\mu\rho}^{\lambda}\Gamma_{\nu\lambda}^{\sigma}+\Gamma_{\lambda\rho}^{\sigma}(\Gamma_{\mu\nu}^{\lambda}-\Gamma_{\nu\mu}^{\lambda}),
\ee
where the curvature is clearly antisymmetric in the $\mu,\nu$ indices.\\
Now in order to get explicit relations for the terms on the right hand of the above equation, recall first that the exterior derivations are Lie derivations defined by (\ref{lie20}) and making use of (\ref{bilin4}) and (\ref{der9})
we have that
\begin{align}\label{curv1}
D_{ X_\mu}(\Gamma_{\nu\rho}^{\sigma})=& i_{\hat X_{\mu}}\circ d\;\Gamma_{\nu\rho}^{\sigma}(\mathcal O_1 ,\mathcal O_2 ) =  i_{\hat X_{\mu}}\circ \Big(\frac{\partial \Gamma_{\nu\rho}^{\sigma}}{\partial \mathcal O_1} d{\mathcal O_1} +\frac{\partial \Gamma_{\nu\rho}^{\sigma}}{\partial \mathcal O_2}d{\mathcal O_2} \Big)\nonumber\\
=&\frac{\partial \Gamma_{\nu\rho}^{\sigma}}{\partial \mathcal O_1} D_{X_{\mu}}{\mathcal O_1} +\frac{\partial \Gamma_{\nu\rho}^{\sigma}}{\partial \mathcal O_2}D_{X_{\mu}}{\mathcal O_2} = \frac{\partial \Gamma_{\nu\rho}^{\sigma}}{\partial \mathcal O_1} \phi_{\mu}(\mathcal Z){\mathcal O_1}+ 2{\frac{\partial \Gamma_{\nu\rho}^{\sigma}}{\partial \mathcal O_2} \mathcal N^{1}_{\mu 1}}{\mathcal O_2}.
\end{align}
Consequently (\ref{riem2}) reads
\be\label{curv2}
R^{\sigma}_{\rho\nu\mu}= \frac{\partial \Gamma_{\mu\rho}^{\sigma}}{\partial \mathcal O_1} \phi_{\nu}(\mathcal Z){\mathcal O_1}+ 2{\frac{\partial \Gamma_{\mu\rho}^{\sigma}}{\partial \mathcal O_2} \mathcal N^{1}_{\nu 1}}{\mathcal O_2}-\frac{\partial \Gamma_{\nu\rho}^{\sigma}}{\partial \mathcal O_1} \phi_{\mu}(\mathcal Z){\mathcal O_1}- 2{\frac{\partial \Gamma_{\nu\rho}^{\sigma}}{\partial \mathcal O_2} \mathcal N^{1}_{\mu 1}}{\mathcal O_2}  - \Gamma_{\nu\rho}^{\lambda}\Gamma_{\mu\lambda}^{\sigma}+\Gamma_{\mu\rho}^{\lambda}\Gamma_{\nu\lambda}^{\sigma}+\Gamma_{\lambda\rho}^{\sigma}B_{[\mu\nu]}^{\lambda},
\ee
and, making use of (\ref{tor3}), the quantum Ricci tensor is given by
\be\label{riem3}
R_{\rho\mu}=\frac{\partial \Gamma_{\mu\rho}^{\sigma}}{\partial \mathcal O_1} \phi_{\sigma}(\mathcal Z){\mathcal O_1}+ 2{\frac{\partial \Gamma_{\mu\rho}^{\sigma}}{\partial \mathcal O_2} \mathcal N^{1}_{\sigma 1}}{\mathcal O_2}-\frac{\partial \Gamma_{\sigma\rho}^{\sigma}}{\partial \mathcal O_1} \phi_{\mu}(\mathcal Z){\mathcal O_1}- 2{\frac{\partial \Gamma_{\sigma\rho}^{\sigma}}{\partial \mathcal O_2} \mathcal N^{1}_{\mu 1}}{\mathcal O_2} +\Gamma_{\mu\rho}^{\lambda}\Gamma_{\sigma\lambda}^{\sigma}-\Gamma_{\lambda\rho}^{\sigma}\Gamma_{\sigma\mu}^{\lambda}.\\
\ee
To obtain a scalar from the above, would be apparently simple by tracing with the inverses of the metric components derived in the previous section. Recall however that even-though the inverses commute with (\ref{riem3}), due to the fact that all its Lie-algebraic entries are in the center of the algebra, the tensor $R_{\rho\mu}$ is not symmetric since, as we have shown previously, the connection symbols are neither symmetric nor antisymmetric. Thus acting with the inverses from the right or from the left, as shown in (\ref{inv6} and \ref{inv7}), leads to different traces. This is of course a reflection of the intrinsic noncommutativity of the quantum geometry of the problem and, although it is not our goal here to delve into the detailed quantization of the above field equations, we consider interesting to comment next on some possible implications of our results as well as possible further developments based on our results.

\section {Quantization}
Assume now that the covariant connection in (\ref{covder1}) is compatible with a Hermitian-Hilbert structure. We can then consider the quantization of the quantum curvature (\ref{riem3}) or the quantization of the Poincar\'e Lie algebra space itself. For the first case we shall describe the essentials of Radial Quantization (see {\it e.g.} \cite{papa} ), based on the Hilbert space and quantum mechanical evolution. 

\subsection{Radial quantization}

According to the algebraic structure given in (\ref{lie}), the center of the algebra $ {\mathcal Z(\mathcal A)}$ can be written in terms of the operators  ${\mathcal O_1}, {\mathcal O_2}$. Moreover, since ${\mathcal O_2}$ is a positive definite operator we can consider that it can be chosen to play the role of the time operator in our geometry. Namely, we will use a foliation of $S^1$ spheres of different radii. Usually this procedure is 
called the ``radial quantization". In this context, we assume that the center of the  circle is located at $X_{1}=0, X_{2} =0$, since according to our algebra these coordinates commute. However, in the true four dimensional space generated by the algebra, this center is not well defined as the $X_{1}$ and $X_{2}$ coordinates fluctuate with respect to the $X_{0}$ and $X_{3}$ coordinates.

 We note though, that we could equally well quantize our geometry with respect to any other point and that this should give the same physics. 
Now, the generator that moves us from one circle to the other in the radial quantization formalism is the dilatation generator ${\mathcal D}$, that in our case will be the conjugate variable  associated to      
${\mathcal O_2}$. It will thus play the role of the Hamiltonian which, in the context of Noncommutative Gravity, defines the time-lapse function $N$. If we further recall that our generalization of vector fields corresponds to the derivations $D_{X_\mu}$ and, taking into account that in classical mechanics $H=-p_{t}$, then in our case the corresponding Hamiltonian operator will be
\begin{equation}\label{ham1}
\hat H=iD_{\mathcal O_2}.
\end{equation}
In addition, the states living on the circles will be classified according to their scaling dimension 
\begin{equation}\label{ham2}
\hat H |\Delta \rangle =i\Delta |\Delta\rangle,
\end{equation}
and since the only generator that commutes with $\hat H$ is the momentum conjugate to ${\mathcal O_1}$, which in our case will be $\hat M_{\mathcal O_1}=-iD_{\mathcal O_1}$, we get that our states will be classified by
\begin{equation}\label{ham3}
\hat M_{\mathcal O_1} |\Delta, m\rangle = m_{\mathcal O_1} |\Delta, m\rangle.
\end{equation}
Notice that the eigenvalues of $\hat H$ and $\hat M_{\mathcal O_1}$ are continuous since we don't have a quantization condition.
Furthermore, from (\ref{riem3}) our quantum Ricci tensor is defined only in terms of ${\mathcal O_1}, {\mathcal O_2}$,  then acting on the basis defined by (\ref{ham3}) we will get a derivative action. Under this action, the
resulting classical space will have the topology of a cylinder with the universe expanding in the radial time direction.\\

\begin{figure}
 \centering
 \includegraphics[width=16cm, height=10cm]{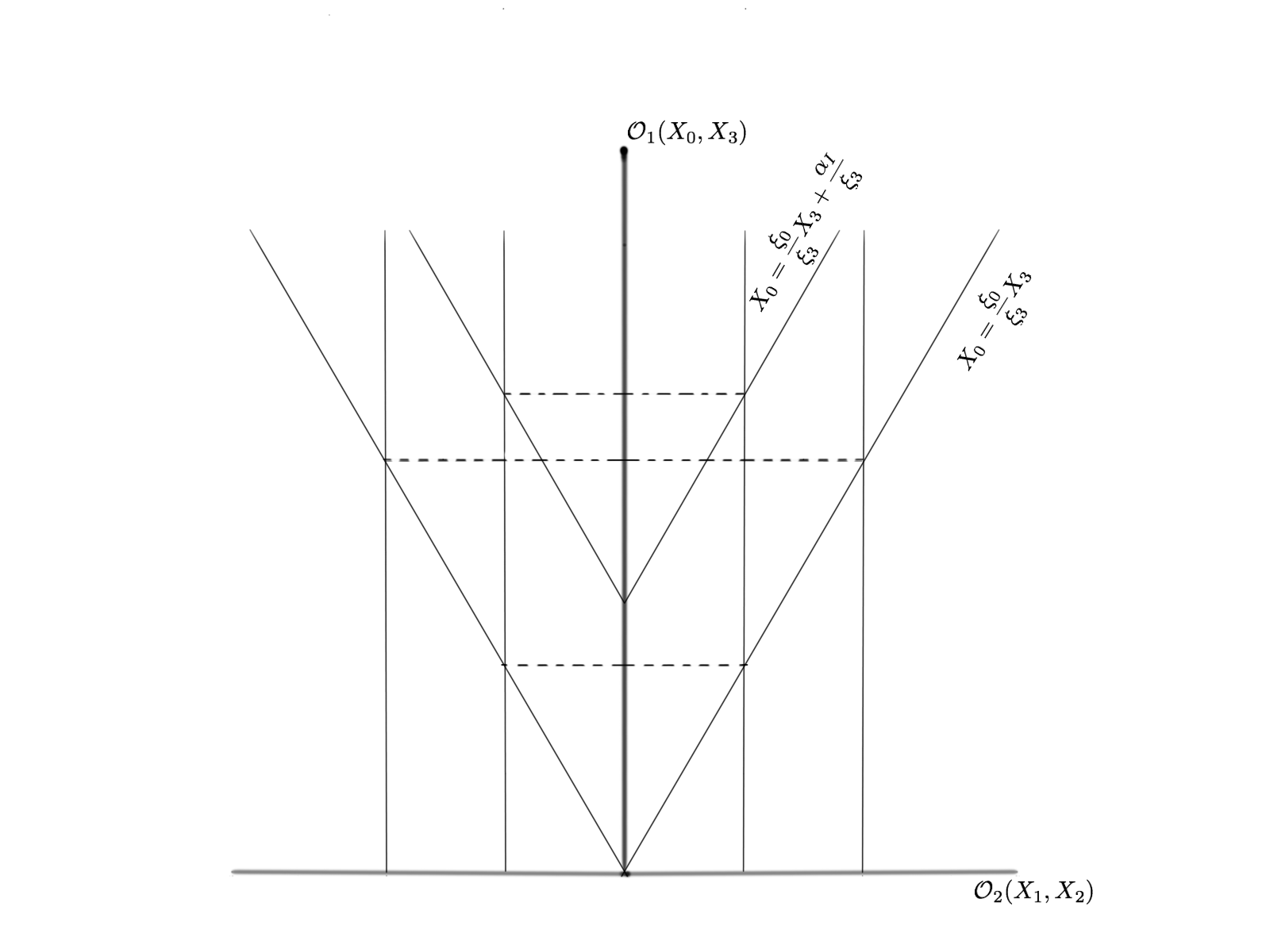}
 \caption{A representation of the upper-half quantum space}
\end{figure}

\subsection{Quantization of the Geometry}

 From the algebraic point of view, given the algebra (\ref{lie}) we can, together with the ${\mathcal O_1}, {\mathcal O_2}$ operators, introduce the operators  $X_3$ and $X_0$ in order to conform a complete set of commuting operators and eigenkets $| o_{1}, o_{2}, j, k >$, where the entries in the bracket are the eigenvalues of the corresponding quantum operators, on which the Hilbert space will be based.  
Consider now the so-called ladder operators $X_\pm=X_1 \pm i X_2$ which satisfy the algebra
\begin{eqnarray}
[X_+,X_-]=0,\qquad [X_3, X_\pm]=\pm 2\xi_3 X_\pm, \qquad  [X_0, X_\pm]=\pm 2\xi_0 X_\pm,\end{eqnarray} acting as raising and lowering operators for the eigenvalues $j, k$, of $X_3$ and $X_0$  respectively. Thus
\begin{align}
X_3(X_\pm)| o_{1}, o_{2}, j, k> &=(j\pm2\xi_3)(X_\pm)| o_{1}, o_{2}, j, k>,\nonumber\\ 
X_0(X_\pm)| o_{1}, o_{2}, j, k> &=(k\pm2\xi_0)(X_\pm)| o_{1}, o_{2}, j, k>, 
\end{align}
so the kets $(X_\pm)| o_{1}, o_{2}, j, k>$ are simultaneous eigenvectors of $X_3$ and $X_0$, with eigenvalues $(j\pm2\xi_3)$ and $(k\pm2\xi_0)$.
Note, however, that since $X_+X_-=X_-X_+={\mathcal O_2}$ the only condition that we get for the eigenvalues of  ${\mathcal O_2}$ is $o_2\geq 0$. It is interesting to note that the operators $X_\pm$ induce a lattice structure on the geometry, with a of length $2\xi_3$ in the $X_3$ direction and  $2\xi_0$ in the $X_0$ direction. This structure \cite{asht} is similar to the one introduced in Polymer Quantum Mechanics. However, in this case the lattice is induced by the noncommutativity, whereas in the case of Polymer Quantum Mechanics it appears as discrete holonomies of fiber bundles in an effective theory evolving from what it will be an ontologically more fundamental Quantum Gravity.

Moreover, consider now the diagram  in Figure 1 above, where we have tried to represent schematically the 4-dimensional space generated by the Lie-algebra considered here. 
 The height of the ordinates is determined by the value of $\mathcal O_1$ in the $(X_{0},X_{3})$ hyperplane,  while the width of the abscissas represent the diameter of the cylinders determined by the value of $\mathcal O_2$ in the $(X_{1},X_{2})$ hyperplane. Noting also that $\mathcal O_{1}=0 \Rightarrow X_{0}=\frac{\xi_0}{\xi_3}X_{3}$ and $\mathcal O_{1}>0 \Rightarrow X_{0}=\frac{\xi_0}{\xi_3}X_{3}+ \frac{\alpha_{I}}{\xi_3}$, where $\alpha_{I} $ has dimensions of $\lambda_{P}^{2}$ (the square of the Planck length) and, although at this point it would be possible to take it as an uncountable set we could, by a Bayesian reasoning consistent with the arguments in the Introduction, take it so that the magnitude of the quotient $\frac{\alpha_{I}}{\xi_3}$ be also of the order of a Planck length. Hence, since the elements of these two hyperplanes {\bf do not commute}, the intersections of those equations of constant slope that determine the different ``points" of crossing  of the $(X_{1},X_{2})$ and $(X_{0}, X_{3})$ hyperplanes, are actually fuzzy points over discrete intervals.
Note in particular, that this applies to the origin, where $\mathcal O_{1}=0,\;\; \mathcal O_{2}=0 $, and in this sense 
noncommutativity solves the singularity problem of space-time and induces the possibility of discreteness of the eigenvalues of (\ref{ham2}-\ref{ham3}).

\section{The Twisted Metric Deformation Revisited}

The main objective of this section is to relate our results of Secs. 3 and 4 to the twisted deformation formalism discussed in Sec.2, in order to obtain the Weyl symbols for the quantum metric to further investigate the formula for the curvature in (\ref {riem2}). For this purpose we recall that the Weyl symbol of a product of operators $(\hat{f}\hat{g})_{W}$ is the twisted product of their associated Weyl symbols, ({\it c.f.}(\ref{twistzeta2}))
\begin{equation}\label{5.1}
\frak f_{W}\star\frak g_{W}=m\circ\Big(\exp{(i\xi^{\lambda}\partial_{\lambda}\wedge \overline{M}_{\alpha\beta})}\frak f_{W}\otimes\frak g_{W}\Big), \qquad \overline{M}_{\alpha\beta}:=\alpha\partial_\beta - x_{\beta}\partial_\alpha.
\end{equation}

Thus, to derive the Weyl symbols of the quantum metrics ({\it cf.} equations (\ref{Coeff1a}-\ref{Coeff3a}) and (\ref{Coeff1c}-\ref{Coeff7c}))
we need first to calculate the Weyl symbols of products such as ${X}_{\mu}{X}_{\nu}$,
while from (\ref{Coeff1b}-\ref{Coeff3b}) we need the Weyl symbols of $\mathcal{O}_{1}$, $\mathcal{O}_{1}^{2}$, $\mathcal{O}_{1}{X}_{\mu}$ and $\mathcal{O}_{1}^{2}{X}_{\mu}$. To this end« using the equation (\ref{5.1}) 
it is straightforward to arrive at
\begin{align}\label{Weylsymcons}
\begin{split}
({X}_{\mu})_{W}&=x_{\mu}, \qquad ({X}_{\mu}{X}_{\nu})_{W}=x_{\mu}x_{\nu}+\frac{1}{2}\mathcal{C}_{\mu\nu}^{\sigma}x_{\sigma}, \\
(\mathcal{O}_{1})_{W}&=O_{1},\qquad  (\mathcal{O}_{1}{X}_{\mu})_{W}=O_{1}x_{\mu}, \\
(\mathcal{O}_{1}^{2})_{W}&=O_{1}^{2},\qquad  (\mathcal{O}_{1}^{2}{X}_{\mu})_{W}=
O_{1}^{2}x_{\mu}.
\end{split}
\end{align}  

Moreover, having these results the Weyl symbols for the quantum metric components, in their general form (\ref{center1}), is readily shown to be given by 
\begin{equation}\label{Weylsymbmetr}
({{g}}_{\mu\nu})_{W}=a_{(\mu\nu)}
+a_{(\mu\nu)}^{\sigma}x_{\sigma}
+a_{(\mu\nu)}^{\sigma\tau}(x_{\sigma}x_{\tau}+\frac{1}{2}\mathcal{C}_{\sigma\tau}^{\lambda}x_{\lambda}),
\end{equation}
where it should be noted that both $a_{(\mu\nu)}$ and $a_{(\mu\nu)}^{\sigma}$ are functions of $O_{1}$ while $a_{(\mu\nu)}^{\sigma\tau}$ is independent of it.

Consequently the Weyl symbols of the quantum metric components can be summarized by
\begin{align}\label{Weylmetrcomp}
(\mathfrak{{g}}_{00})_{W}&=a_{0}+a_{1}x_{3}+a_{2}x_{3}^{2}, & (\mathfrak{{g}}_{12})_{W}&=\delta x_{1}x_{2},\nonumber \\
(\mathfrak{{g}}_{01})_{W}&=\alpha x_{1} +2\beta x_{1}x_{3},, & (\mathfrak{{g}}_{13})_{W}&=\alpha^{\prime} x_{1}+2\beta^{\prime} x_{1}x_{3},\nonumber \\
(\mathfrak{{g}}_{02})_{W}&=\alpha x_{2} +2\beta x_{2}x_{3}, & (\mathfrak{{g}}_{22})_{W}&=\delta x_{2}^{2},\\
(\mathfrak{{g}}_{03})_{W}&=b_{0}+b_{1}x_{3}+b_{2}x_{3}^{2}, &
(\mathfrak{{g}}_{23})_{W}&=\alpha^{\prime}x_{2}+2\beta^{\prime}x_{2}x_{3},\nonumber\\
(\mathfrak{{g}}_{11})_{W}&=\delta x_{1}^{2}, &
(\mathfrak{{g}}_{33})_{W}&=c_{0}+c_{1}x_{3}+c_{2}x_{3}^{2}\nonumber.
\end{align}

It is interesting to note also that in these deformed Weyl-metric components the vectors $\xi_{\lambda}$ appear only as quotients  and hence are independent of the inverse Planck length $\kappa$.
To further show that this twisted metric is non-singular we use the quantum determinant expression (\ref{quantumdeter}) and, in order to find the star deformed determinant we replace the metric operators for their corresponding 
Weyl-symbols together with their deformed star multiplication. To this end we first observe that because of the previously derived relations (\ref{consrel1}, \ref{b1}, \ref{b2}, \ref{consrel2}, \ref{consrel3}), together with the additional
relations (\ref{constriction1}) due to the metricity condition  outlined in Sec.4, the quantum determinant (\ref{app}) is considerably simplified to
\be\label{simpqdet}
\det (\frak{{g}}_{\mu\nu})= (b_0-\xi_3 \xi_0 ^{-1} a_0)\Sigma_2 +\xi_3 \xi_0 ^{-1}(b_0-\xi_0 \xi_3 ^{-1} c_0)[\frak{{g}}_{10} J_5 + \frak{{g}}_{20} J_4 ],
\ee

where 
\begin{subequations}
\begin{align}\label{simpqdet2}
&J_4 = -2i\xi_3 \beta\delta X_{1}(3\mathcal O_2 -4X_{1}^2)\\
&J_5 = 2i\xi_3 \beta\delta X_{2}(\mathcal O_2 -4X_{1}^2)\\
&\Sigma _2=(2i\xi_3 \beta\delta) \xi_3 \xi_0 ^{-1}[ 2i\xi_3 \beta \mathcal O_2 ^2 -2 \mathcal O_2 X_{1}X_{2} (\alpha + 2\beta X_{3}) - 4i\xi_3 \beta  \mathcal O_2 X_{1}^2 ]\\
&\xi_3 \xi_0 ^{-1}(\frak{{g}}_{10}J_5 + \frak{{g}}_{20}J_4)=2i\xi_3 \beta\delta (\xi_3 \xi_0 ^{-1})[-10i \xi_3 \beta \mathcal O_2 ^2 -2 \mathcal O_2  X_{1}X_{2} ( \alpha +2\beta X_{3} ) +20i \xi_3 \mathcal O_2 \beta X_{1}^2  ].
\end{align}
\end{subequations}
Substituting these expressions in (\ref{simpqdet}) we obtain 
\be
\det (\frak{{g}}_{\mu\nu})= \xi_3 X_{1}X_{2} K_2 + \xi_3^{2}(K_0 + X_{1}^2 K_1),
\ee

where
\begin{align}
K_0 (\mathcal O_1, \mathcal O_2)=&-4\beta^2 \delta \mathcal O_2 ^2 [5 (c_0 -\xi_3 \xi_0 ^{-1}b_0)+\xi_{3}\xi_{0}^{-1}(b_0 -\xi_3 \xi_0 ^{-1}a_0) ],\\
K_1 (\mathcal O_1, \mathcal O_2)=&8 \beta^2 \delta \mathcal O_2 [5 (c_0 -\xi_3 \xi_0 ^{-1}b_0)+\xi_3 \xi_0 ^{-1} (b_0 -\xi_3 \xi_0 ^{-1}a_0)],\\
K_2 (\mathcal O_1, \mathcal O_2, X_{3})=& 4i\beta\delta \mathcal O_2 [ (c_0 -\xi_3 \xi_0 ^{-1}b_0) -  \xi_3 \xi_0 ^{-1} (b_0 -\xi_3 \xi_0 ^{-1}a_0)(\alpha +2\beta X_{3}) ].
\end{align}

Notice that the result is in agreement with the expression (\ref{QD}) of the Appendix A, where if we put the restriction $a_{1}=\xi_{0}\xi_{3}^{-1}b_{1}=\xi_{0}^{2}\xi_{3}^{-2}c_{1}$ that comes from the metricity, we obtain that these operators $ $ coincide with the $K_{i}$ found there.

The corresponding Weyl functions of the above quantum operators can be now readily obtained by repeated application of the deformation $\star$-product to the products of the algebra.

  The final result of such a procedure is that we get 
\begin{align}\label{appweyl}
\det (\frak{{g}}_{\mu\nu})_{W}=\xi_{3}(x_{1}x_{2}k_{2})+\xi_{3}^{2}(r_{1}+r_{2}x_{1}^{2}),
\end{align}

where $k_{i}$ are the Weyl functions of $K_{i}$ and $r_{1}$, $r_{2}$ are combinations of the $k_{0}$, $k_{1}$ with additional terms that come from their star product with the elements of the algebra,
so that
\begin{align}
k_{2}&=\alpha_{1}+\alpha_{2}x_{3}=4i\beta\delta \mathcal O_2 [ (c_0 -\xi_3 \xi_0 ^{-1}b_0) -  \xi_3 \xi_0 ^{-1} (b_0 -\xi_3 \xi_0 ^{-1}a_0)(\alpha +2\beta x_{3}) ]\\
r_{1}&=k_{0}-i\alpha_{2}O_{2}=-12\beta^{2}\delta O_{2}^{2}[ (c_0 -\xi_3 \xi_0 ^{-1}b_0) +  \xi_3 \xi_0 ^{-1} (b_0 -\xi_3 \xi_0 ^{-1}a_0)]\\
r_{2}&=k_{1}+2i\alpha_{2}=24\beta^{2}\delta O_{2}[ (c_0 -\xi_3 \xi_0 ^{-1}b_0) -  \xi_3 \xi_0 ^{-1} (b_0 -\xi_3 \xi_0 ^{-1}a_0)].
\end{align}

\section{ACKNOWLEDGMENTS}
J. D. V.  acknowledges partial support from CONACYT project 237503 and DGAPA-UNAM grant IN 103716.

\appendix
\section{QUANTUM METRIC DETERMINANT}
In this appendix, we briefly review the explicit calculation of the quantum determinant found at the end of section 3.2 of this paper. For this purpose we will use the relation between the coefficients of the metric given by
\begin{align}\label{A1}
a_{2}&=\xi_{0}\xi_{3}^{-1}b_{2}=\xi_{0}^{2}\xi_{3}^{-2}c_{2},\qquad
\beta^{'}=\xi_{3}\xi_{0}^{-1}\beta,
\end{align}
through these expressions we find that some components of the metric are related to each other in the following way
\begin{align}\label{A2}
\begin{split}
\mathfrak{g}_{03}&=(b_{0}-\xi_{3}\xi_{0}^{-1}a_{0})+(b_{1}-\xi_{3}\xi_{0}^{-1}a_{1})X_{3}+\xi_{3}\xi_{0}^{-1}\mathfrak{g}_{00},\\
\mathfrak{g}_{33}&=(c_{0}-\xi_{3}^{2}\xi_{0}^{-2}a_{0})+(c_{1}-\xi_{3}^{2}\xi_{0}^{-2}a_{1})X_{3}+\xi_{3}^{2}\xi_{0}^{-2}\mathfrak{g}_{00},\\
\mathfrak{g}_{13}&=(\alpha^{\prime}-\xi_{3}\xi_{0}^{-1}\alpha)X_{1}+\xi_{3}\xi_{0}^{-1}\mathfrak{g}_{01},\\
\mathfrak{g}_{23}&=(\alpha^{\prime}-\xi_{3}\xi_{0}^{-1}\alpha)X_{2}+\xi_{3}\xi_{0}^{-1}\mathfrak{g}_{02}.
\end{split}
\end{align}

Now, with the purpose of simplifying the notation we can then express the determinant as:
\be\label{sig3}		 
\det\mathfrak{{g}}_{\mu\nu}={\frak g}_{00}\Sigma_{1}+
{\frak g}_{30}\Sigma_{2}+{\frak g}_{10}\Sigma_{3}-{\frak g}_{20}\Sigma_{4}, 		 
\ee
with the operators $J_{i}$ defined by equations
    \begin{align}\label{A3}
    \begin{split}
	J_{1}&:={\frak g}_{11}{\frak g}_{22}-{\frak g}_{21}{\frak g}_{12}=0,\\
	J_{2}&:={\frak g}_{31}{\frak g}_{12}-{\frak g}_{11}{\frak g}_{32}=-\xi_{3}\xi_{0}^{-1}J_{4},\\
	J_{3}&:={\frak g}_{21}{\frak g}_{32}-{\frak g}_{31}{\frak g}_{22}=-\xi_{0}\xi_{3}^{-1}J_{5},\\
	J_{4}&:={\frak g}_{11}{\frak g}_{02}-{\frak g}_{01}{\frak g}_{12}=2i\xi_{3}\beta\delta X_{1}(4X_{1}^{2}-3\mathcal{O}_{2}),\\
	J_{5}&:={\frak g}_{01}{\frak g}_{22}-{\frak g}_{21}{\frak g}_{02}=2i\xi_{3}\beta\delta X_{2}(\mathcal{O}_{2}-4X_{1}^{2}),\\
	J_{6}&:={\frak g}_{01}{\frak g}_{32}-{\frak g}_{31}{\frak g}_{02}=2i\xi_{3}\beta(\alpha^{'}-\xi_{0}^{-1}\xi_{3}\alpha)(\mathcal{O}_{2}-X_{1}^{2}),
	\end{split}
	\end{align}
where we have used the relations (\ref{A2}) and the algebra to explicitly calculate each of them, whereas that operators $\Sigma_{i}$ are
\begin{align}\label{A4}
\begin{split}
&\Sigma_{1}:=J_{1}{\frak g}_{33}+J_{2}{\frak g}_{23}+J_{3}{\frak g}_{13},\\
&\Sigma_{2}:=J_{4}{\frak g}_{23}+J_{5}{\frak g}_{13}-J_{1}{\frak g}_{03},\\
&\Sigma_{3}:=J_{6}{\frak g}_{23}-J_{5}{\frak g}_{33}-J_{3}{\frak g}_{03},\\
&\Sigma_{4}:=J_{4}{\frak g}_{33}+J_{6}{\frak g}_{13}+J_{2}{\frak g}_{03}.
\end{split}
\end{align}
		 
In order to further calculate these four terms we use las relaciones (\ref{A2}) to obtain
\begin{align}\label{A5}
\begin{split}
&\Sigma_{1}=-\xi_{3}\xi_{0}^{-1}\Sigma_{2},\\
&\Sigma_{2}=(\alpha^{'}-\xi_{3}\xi_{0}^{-1}\alpha)(J_{4}X_{2}+J_{5}X_{1})+\xi_{3}\xi_{0}^{-1}(J_{4}\mathfrak{g}_{02}+J_{5}\mathfrak{g}_{01}),\\
&\Sigma_{3}=\xi_{3}\xi_{0}^{-1}J_{5}\Big((b_{0}-\xi_{0}\xi_{3}^{-1}c_{0})+(b_{1}-\xi_{0}\xi_{3}^{-1}c_{1})X_{3}\Big)+(\alpha^{'}-\xi_{3}\xi_{0}^{-1}\alpha)J_{6}X_{2}+\xi_{3}\xi_{0}^{-1}J_{6}\mathfrak{g}_{02},\\
&\Sigma_{4}=-\xi_{3}\xi_{0}^{-1}J_{4}\Big((b_{0}-\xi_{0}\xi_{3}^{-1}c_{0})+(b_{1}-\xi_{0}\xi_{3}^{-1}c_{1})X_{3}\Big)+(\alpha^{'}-\xi_{3}\xi_{0}^{-1}\alpha)J_{6}X_{1}+\xi_{3}\xi_{0}^{-1}J_{6}\mathfrak{g}_{01},
\end{split}
\end{align}		 
using (\ref{A5}) the determinant becomes
\begin{equation}\label{A6}
\det\mathfrak{g}_{\mu\nu}=\Sigma_{2}R_{1}(X_{3})+(b_{1}-\xi_{3}\xi_{0}^{-1}a_{1})[X_{3},\Sigma_{2}]+(\mathfrak{g}_{10}J_{5}+\mathfrak{g}_{02}J_{4})R_{2}(X_{3})+\mathfrak{g}_{01}J_{6}\mathfrak{g}_{23}-\mathfrak{g}_{02}J_{6}\mathfrak{g}_{13}
\end{equation}
where we have defined the following operators
\begin{align}\label{A7}
\begin{split}
R_{1}(X_{3})&=(b_{0}-\xi_{3}\xi_{0}^{-1}a_{0})+(b_{1}-\xi_{3}\xi_{0}^{-1}a_{1})X_{3},\\
R_{2}(X_{3})&=\xi_{3}\xi_{0}^{-1}\Big[(b_{0}-\xi_{0}\xi_{3}^{-1}c_{0})+(b_{1}-\xi_{0}\xi_{3}^{-1}c_{1})X_{3}\Big].
\end{split}
\end{align}

It is straightforward to find the value of $\Sigma_{2}$, simply by substituting the values of $J_{4}$ and $J_{5}$ obtained in (\ref{A3}) together with the components of the metric $\mathfrak{g}_{01}$ and $\mathfrak{g}_{02}$, while we calculate its commutator with the element $X_{3}$ by using our algebra. In this way it is easy to see that the result is
\begin{align*}
\Sigma_{2}&=-4i\xi_{3}\beta\delta(\alpha^{\prime}-\xi_{3}\xi_{0}^{-1}\alpha)\mathcal{O}_{2}X_{1}X_{2}+4i\xi_3 \beta\delta \mathcal O_2 (\xi_3 \xi_0 ^{-1})\Big( i\xi_3 \beta \mathcal O_2 - X_{1}X_{2} (\alpha + 2\beta X_{3}) - 2i\xi_3 \beta   X_{1}^2 \Big),\\
[X_{3},\Sigma_{2}]&=S_{1}-2X_{1}^{2}S_{1}+X_{1}X_{2}S_{2},
\end{align*}	
with operators $S_{i}$ defined by
\begin{align}\label{A10}
\begin{split}
S_{1}(X_{3})&=8\xi_{3}^{2}\beta\delta\mathcal{O}_{2}^{2}[(\alpha^{\prime}-\xi_{3}\xi_{0}^{-1}\alpha)+\xi_{3}\xi_{0}^{-1}(\alpha+2\beta X_{3})],\\
S_{2}&=32i\xi_{3}^{3}\beta\delta\mathcal{O}_{2}(\xi_{3}\xi_{0}^{-1})
\end{split}.
\end{align}

Thus, by combining these two expressions we can write the first two terms of the determinant (\ref{A6}) as follows
\begin{align}\label{A11}
\Sigma_{2}R_{1}(X_{3})+(b_{1}-\xi_{3}\xi_{0}^{-1}a_{1})[X_{3},\Sigma_{2}]=A_{1}(X_{3})+X_{1}^{2}A_{2}(X_{3})+X_{1}X_{2}A_{3}(X_{3}),
\end{align}
with the operators $A_{i}$ defined by the expressions
\begin{align}\label{A12}
\begin{split}
&A_{1}(X_{3})=-4\xi_{3}^{2}\beta^{2}\delta\mathcal{O}_{2}(\xi_{3}\xi_{0}^{-1})R_{1}(X_{3})+(b_{1}-\xi_{3}\xi_{0}^{-1}a_{1})S_{1}(X_{3}),\\
&A_{2}(X_{3})=8\xi_{3}^{2}\beta^{2}\delta\mathcal{O}_{2}(\xi_{3}\xi_{0}^{-1})R_{1}(X_{3})-2(b_{1}-\xi_{3}\xi_{0}^{-1}a_{1})S_{1}(X_{3}),\\
&A_{3}(X_{3})=-4i\xi_{3}\beta\delta\mathcal{O}_{2}[(\alpha^{'}-\xi_{3}\xi_{0}^{-1}\alpha)+\xi_{3}\xi_{0}^{-1}(\alpha+2\beta X_{3})]R_{1}(X_{3})+(b_{1}-\xi_{3}\xi_{0}^{-1}a_{1})S_{2}.
\end{split}
\end{align}

Now we will focus on calculating the third term in (\ref{A6}), which we can write as
\begin{align}\label{A13}
\mathfrak{g}_{10}J_{5}+\mathfrak{g}_{02}J_{4}=[\mathfrak{g}_{01},J_{5}]+[\mathfrak{g}_{02},J_{4}]+J_{5}\mathfrak{g}_{01}+J_{4}\mathfrak{g}_{02},
\end{align}	
it is easily computed from the relations
\begin{align*}
\begin{split}
[\mathfrak{g}_{10},J_{5}]&=8\xi_{3}^{2}\beta^{2}\delta\mathcal{O}_{2}X_{1}^{2}-32\xi_{3}^{2}\beta^{2}\delta X_{1}^{4}+64\xi_{3}^{2}\beta^{2}\delta X_{1}^{2}X_{2}^{2},\\
[\mathfrak{g}_{02},J_{4}]&=24\xi_{3}^{2}\beta^{2}\delta\mathcal{O}_{2}X_{2}^{2}-96\xi_{3}^{2}\beta^{2}\delta X_{1}^{2}X_{3}^{2},\\
J_{5}\mathfrak{g}_{10}&=2i\xi_{3}\beta\delta\mathcal{O}_{2}X_{1}X_{2}(\alpha+2\beta X_{3})-8i\xi_{3}\beta\delta X_{1}^{3}X_{2}(\alpha+2\beta X_{3})-4\xi_{3}^{2}\beta^{2}\delta\mathcal{O}_{2}X_{2}^{2}+16\xi_{3}^{2}\beta^{2}\delta X_{1}^{2}X_{2}^{2},\\
J_{4}\mathfrak{g}_{02}&=-6i\xi_{3}\beta\delta\mathcal{O}_{2}X_{1}X_{2}(\alpha+2\beta X_{3})+8i\xi_{3}\beta\delta X_{1}^{3}X_{2}(\alpha+2\beta X_{3})-12\xi_{3}^{2}\beta^{2}\delta\mathcal{O}_{2}X_{1}^{2}+16\xi_{3}^{2}\beta^{2}\delta X_{1}^{4}.
\end{split}
\end{align*}	

In this way we have already calculated the third summing of the determinant (\ref{A6}), which we can write as follows
\begin{align}\label{A15}
(\mathfrak{g}_{10}J_{5}+\mathfrak{g}_{02}J_{4})R_{2}(X_{3})=A_{4}(X_{3})+X_{1}^{2}A_{5}(X_{3})+X_{1}X_{2}A_{6}(X_{3})
\end{align}
where, in the above formula
\begin{align}\label{A16}
\begin{split}
A_{4}(X_{3})&=20\xi_{3}^{2}\beta^{2}\delta\mathcal{O}_{2}^{2}R_{2}(X_{3}),\\
A_{5}(X_{3})&=-40\xi_{3}^{2}\beta^{2}\delta\mathcal{O}_{2}R_{2}(X_{3}),\\
A_{6}(X_{3})&=-4i\xi_{3}\beta\delta\mathcal{O}_{2}(\alpha+2\beta X_{3})R_{2}(X_{3}).
\end{split}
\end{align}	

Finally in order to simplify the calculation, we can rewrite the last two terms in (\ref{A6}) as
\begin{align*}
\mathfrak{g}_{01}J_{6}\mathfrak{g}_{23}&=([\mathfrak{g}_{10},J_{6}]+J_{6}\mathfrak{g}_{10})\Big((\alpha^{\prime}-\xi_{3}\xi_{0}^{-1}\alpha)X_{2}+\xi_{3}\xi_{0}^{-1}\mathfrak{g}_{02}\Big)=[\mathfrak{g}_{10},J_{6}]\Big((\alpha^{\prime}-\xi_{3}\xi_{0}^{-1}\alpha)X_{2}+\xi_{3}\xi_{0}^{-1}\mathfrak{g}_{02}\Big)+\\
&+(\alpha^{'}-\xi_{3}\xi_{0}^{-1}\alpha)J_{6}([\mathfrak{g}_{10},X_{2}]+X_{2}\mathfrak{g}_{10})+\xi_{3}\xi_{0}^{-1}J_{6}\mathfrak{g}_{10}\mathfrak{g}_{20},\\
\mathfrak{g}_{02}J_{6}\mathfrak{g}_{13}&=([\mathfrak{g}_{20},J_{6}]+J_{6}\mathfrak{g}_{20})\Big((\alpha^{\prime}-\xi_{3}\xi_{0}^{-1}\alpha)X_{1}+\xi_{3}\xi_{0}^{-1}\mathfrak{g}_{01}\Big)=[\mathfrak{g}_{20},J_{6}]\Big((\alpha^{\prime}-\xi_{3}\xi_{0}^{-1}\alpha)X_{1}+\xi_{3}\xi_{0}^{-1}\mathfrak{g}_{01}\Big)+\\
&+(\alpha^{\prime}-\xi_{3}\xi_{0}^{-1}\alpha)J_{6}([\mathfrak{g}_{20},X_{1}]+X_{1}\mathfrak{g}_{20})+\xi_{3}\xi_{0}^{-1}J_{6}\mathfrak{g}_{20}\mathfrak{g}_{10},
\end{align*}
thereby we have that its difference is
\begin{align}\nonumber
\mathfrak{g}_{01}J_{6}\mathfrak{g}_{23}-\mathfrak{g}_{02}J_{6}\mathfrak{g}_{13}
&=[\mathfrak{g}_{10},J_{6}]\Big((\alpha^{\prime}-\xi_{3}\xi_{0}^{-1}\alpha)X_{2}+\xi_{3}\xi_{0}^{-1}\mathfrak{g}_{02}\Big)-[\mathfrak{g}_{20},J_{6}]\Big((\alpha^{\prime}-\xi_{3}\xi_{0}^{-1}\alpha)X_{1}+\xi_{3}\xi_{0}^{-1}\mathfrak{g}_{01}\Big)\\\label{A17}
&+(\alpha^{\prime}-\xi_{3}\xi_{0}^{-1}\alpha)J_{6}([\mathfrak{g}_{10},X_{2}]-[\mathfrak{g}_{20},X_{1}]+X_{2}\mathfrak{g}_{10}-X_{1}\mathfrak{g}_{20})+\xi_{3}\xi_{0}^{-1}J_{6}[\mathfrak{g}_{10},\mathfrak{g}_{20}].
\end{align}

Moreover, one can check by straightforward calculations that the commutators in the above expression are
\begin{align}\label{A18}
\begin{split}
[\mathfrak{g}_{10},J_{6}]&=16\xi_{3}^{2}\beta^{2}(\alpha^{\prime}-\xi_{3}\xi_{0}^{-1}\alpha)X_{1}^{2}X_{2},\\
[\mathfrak{g}_{20},J_{6}]&=16\xi_{3}^{2}\beta^{2}(\alpha^{\prime}-\xi_{3}\xi_{0}^{-1}\alpha)X_{1}X_{2}^{2},\\
[\mathfrak{g}_{10},X_{3}]&=-4i\xi_{3}\beta X_{1}^{2},\\
[\mathfrak{g}_{20},X_{1}]&=4i\xi_{3}\beta X_{2}^{2},\\
[\mathfrak{g}_{10},\mathfrak{g}_{20}]&=-4i\xi_{3}\beta\mathcal{O}_{2}(\alpha+2\beta X_{3}),
\end{split}
\end{align}
substituting this into (\ref{A17}) we obtain that this expression has the simple form
\begin{align}\label{A19}
\mathfrak{g}_{01}J_{6}\mathfrak{g}_{23}-\mathfrak{g}_{02}J_{6}\mathfrak{g}_{13}=A_{7}(X_{3})-X_{1}^{2}A_{7}(X_{3})+X_{1}X_{2}A_{8},
\end{align}
where their coefficients are given by the following expressions
\begin{align}\label{A20}
\begin{split}
A_{7}(X_{3})&=4\xi_{3}^{2}\beta^{2}\mathcal{O}_{2}^{2}(\alpha^{\prime}-\xi_{3}\xi_{0}^{-1}\alpha)[(\alpha^{\prime}-\xi_{3}\xi_{0}^{-1}\alpha)+2\xi_{3}\xi_{0}^{-1}(\alpha+2\beta X_{3})],\\
A_{8}&=-32i\xi_{3}^{3}\beta^{3}\mathcal{O}_{2}(\alpha^{\prime}-\xi_{3}\xi_{0}^{-1}\alpha)\xi_{3}\xi_{0}^{-1}.
\end{split}
\end{align}	

Therefore, by substituting the expressions (\ref{A11}), (\ref{A15}) and (\ref{A19}) in (\ref{A6}) we obtain that the quantum determinant reads
\begin{align}\label{QD}
\det\mathfrak{g}_{\mu\nu}=K_{0}(X_{3})+X_{1}^{2}K_{1}(X_{3})+X_{1}X_{2}K_{2}(X_{3}),
\end{align}
where, the operators $K_{\mu}$ are defined in terms of the $A_{i}$ as follows
\begin{align}\label{A22}
\begin{split}
K_{0}(X_{3})&=A_{1}(X_{3},\xi_{3}^{2})+A_{4}(X_{3},\xi_{3}^{2})+A_{7}(X_{3},\xi_{3}^{2}),\\
K_{1}(X_{3})&=A_{2}(X_{3},\xi_{3}^{2})+A_{5}(X_{3},\xi_{3}^{2})-A_{7}(X_{3},\xi_{3}^{2}),\\
K_{2}(X_{3})&=A_{3}(X_{3},\xi_{3}^{3})+A_{6}(X_{3},\xi_{3})+A_{8}(\xi_{3}^{3});
\end{split}
\end{align}
we note that in these expressions $K_{i}=K_{i}(\mathcal{Z(A)},X_{3})$ for $i=0,1,2$. 

Also note that this determinant is expressed in terms of powers of $\xi_{3}$, so one could calculate in principle
the inverse of $\mathfrak{g}_{\mu\nu}$ to some order in $\xi_{3}$.  On the other hand, since $\xi_{3}$ is the only element in the  above determinant that depends on the Planck length, setting it equal to zero would make 
(\ref{QD})  singular. It would appear at first sight that the limit of the metric, when (\ref{A12}), (\ref{A16}), (\ref{A17}) and (\ref{A22}) $\to 0$,  is the classical metric of our system. This is of course obviously wrong, since  
the classical limit $\kappa\to 0$, in (\ref{opdef1}-\ref{const1}) would lead to a total commutativity of the generators, {\it i.e.}  not a Lie-algebra and no Inner derivations. Therefore no centrality conditions at all and the problem would then be reduced to one of standard differential geometry.  

\section{METRICITY CONDITION}
Beginning with the metricity condition: \\
\be D_{{X}_{\rho}}\mathfrak{{g}}_{\mu\nu}=\Gamma_{\rho\mu}^{\sigma}\mathfrak{{g}}_{\sigma\nu}+\Gamma_{\rho\nu}^{\sigma}\mathfrak{{g}}_{\sigma\mu},\label{metricitycondappendix}
\end{equation}
with $\Gamma_{\mu\nu}^{\gamma}\in\mathcal{Z(A)}$ as given in Section 4,
we illustrate here with some explicit calculations the results obtained there to derive the relations for the connection symbols in order to satisfy the conditions of metricity and zero torsion. Consider first the case when  $\mu=\nu$ so the equation (\ref{metricitycondappendix}) becomes
\begin{equation}\label{metricitycondappendix-3}
D_{{X}_{\rho}}\mathfrak{{g}}_{\mu\mu}=2\Gamma_{\rho\mu}^{\sigma}\mathfrak{{g}}_{\sigma\mu}.
\end{equation} 

Let us start from $\mu=\nu=1$ the equation (\ref{metricitycondappendix-3}) implies that 
\begin{align*}
D_{{X}_{\rho}}\mathfrak{{g}}_{11}=2\Gamma_{\rho1}^{\sigma}\mathfrak{{g}}_{\sigma1},
\end{align*}
we can calculate both sides in the last equation by means of the relations (\ref{Coeff1c}-\ref{Coeff7c}) and (\ref{der8}), we thus have
\begin{align*}
\delta \mathcal{N}_{\rho1}^{1}X_{1}^{2}+\delta \mathcal{N}_{\rho1}^{2}X_{1}X_{3}=(\alpha\Gamma_{\rho1}^{0}+\alpha^{\prime}\Gamma_{\rho1}^{3})X_{1}+\delta\Gamma_{\rho1}^{1}X_{1}^{2}+\delta\Gamma_{\rho1}^{2}X_{1}X_{3}+(\beta\Gamma_{\rho1}^{0}+\beta^{\prime}\Gamma_{\rho1}^{3})(X_{1}X_{3}+X_{3}X_{1}),
\end{align*}
in order to this equation holds we need that $a_{1}=\xi_{0}\xi_{3}^{-1}b_{1}=\xi_{0}^{2}\xi_{3}^{-2}c_{1}$ and
\begin{align}\label{conditions4}
\Gamma_{\rho1}^{3}=-\xi_{0}\xi_{3}^{-1}\Gamma_{\rho1}^{0},\qquad 
\Gamma_{\rho1}^{1}=\mathcal{N}_{\rho1}^{1},\qquad \Gamma_{\rho1}^{2}=\mathcal{N}_{\rho1}^{2}.
\end{align}
In the same way we obtain for $\mu=\nu=2$
\begin{equation}\label{conditions5}	
\Gamma_{\rho2}^{3}=-\xi_{0}\xi_{3}^{-1}\Gamma_{\rho2}^{0},\qquad \Gamma_{\rho2}^{1}=-\mathcal{N}_{\rho1}^{2},\qquad \Gamma_{\rho2}^{2}=\mathcal{N}_{\rho1}^{1},
\end{equation}
and in a similar manner, resorting to (\ref{Coeff1a}) and (\ref{der8}) for $\mu=\nu=0$,  the left side of (\ref{metricitycondappendix-3}) yields
\begin{align*}
D_{{X}_{\rho}}\mathfrak{{g}}_{00}&=D_{{X}_{\rho}}(a_{0}+a_{1}X_{3}+a_{2}X_{3}^{2})\\
&=D_{X_{\rho}}a_{0}+(D_{X_{\rho}}a_{1})X_{3}+a_{1}(D_{X_{\rho}}X_{3})+a_{2}X_{3} (D_{X_{\rho}}X_{3}) +a_{2}(D_{X_{\rho}}X_{3})X_{3}\\
&=(D_{X_{\rho}}a_{0}+a_{1}\xi_{3}^{-1}\mathcal{N}_{\rho3}^{0}\mathcal{O}_{1})+
a_{1}\xi_{3}\xi_{0}^{-1}\mathcal{N}_{\rho0}^{1}X_{1}+
a_{1}\xi_{3}\xi_{0}^{-1}\mathcal{N}_{\rho0}^{2}X_{3}+
(D_{X_{\rho}}a_{1}+2a_{2}\xi_{3}^{-1}\mathcal{N}_{\rho3}^{0}\mathcal{O}_{1})X_{3}\\
&+a_{2}\xi_{3}\xi_{0}^{-1}\mathcal{N}_{\rho0}^{1}(X_{1}X_{3}+X_{3}X_{1})+a_{2}\xi_{3}\xi_{0}^{-1}\mathcal{N}_{\rho0}^{2}(X_{3}X_{3}+X_{3}X_{3}),
\end{align*}
while, on the other hand, using (\ref{Coeff1a}), (\ref{Coeff1b}), (\ref{Coeff1c}) and (\ref{Coeff2c}), the right side can be written as
\begin{align*}
2\Gamma_{\rho0}^{\sigma}\mathfrak{{g}}_{\sigma0}&=2(a_{0}\Gamma_{\rho0}^{0}+b_{0}\Gamma_{\rho0}^{3})+2\alpha\Gamma_{\rho0}^{1}X_{1}+2\alpha\Gamma_{\rho0}^{2}X_{3}+2(a_{1}\Gamma_{\rho0}^{0}+b_{1}\Gamma_{\rho0}^{3})X_{3}+2(a_{2}\Gamma_{\rho0}^{0}+b_{2}\Gamma_{\rho0}^{3})X_{3}^{2}\\
&+2\beta\Gamma_{\rho0}^{1}(X_{1}X_{3}+X_{3}X_{1})+2\beta\Gamma_{\rho0}^{2}(X_{3}X_{3}+X_{3}X_{3}).
\end{align*}
We thus arrive at the following relations
\begin{subequations}
\begin{align}\label{condition6}
&\Gamma_{\rho0}^{1}=\mathcal{N}_{\rho0}^{1},\qquad \Gamma_{\rho0}^{2}=\mathcal{N}_{\rho0}^{2},\\\label{condition6-1}
&\Gamma_{\rho0}^{3}=-\xi_{0}\xi_{3}^{-1}\Gamma_{\rho0}^{0},\\\label{condition7}
&D_{X_{\rho}}a_{1}+2a_{2}\xi_{3}^{-1}\mathcal{N}_{\rho3}^{0}\mathcal{O}_{1}=0,\\\label{condition7-1}
&2(a_{0}-b_{0}\xi_{0}\xi_{3}^{-1})\Gamma_{\rho0}^{0}=D_{X_{\rho}}a_{0}+a_{1}\xi_{3}^{-1}\mathcal{N}_{\rho3}^{0}\mathcal{O}_{1}.
\end{align}
\end{subequations}
To deal with the case $\mu=\nu=3$ we use an approach analogous that the one described above and obtain the following conditions:
\begin{subequations}
\begin{align}\label{condition8}
&\Gamma_{\rho3}^{1}=\xi_{3}\xi_{0}^{-1}\mathcal{N}_{\rho0}^{1},\qquad \Gamma_{\rho3}^{2}=\mathcal{N}_{\rho0}^{2},\\\label{condition8-1}
&\Gamma_{\rho3}^{3}=-\xi_{0}\xi_{3}^{-1}\Gamma_{\rho3}^{0},\\\label{condition9}
&D_{X_{\rho}}c_{1}+2c_{2}\xi_{3}^{-1}\mathcal{N}_{\rho3}^{0}\mathcal{O}_{1}=0,\\\label{condition9-1} 
&2(b_{0}-c_{0}\xi_{0}\xi_{3}^{-1})\Gamma_{\rho3}^{0}=D_{X_{\rho}}c_{0}+
c_{1}\xi_{3}^{-1}\mathcal{N}_{\rho3}^{0}\mathcal{O}_{1}.
\end{align}
\end{subequations}
Now, for the case $\mu=0$, $\nu=1$ in (\ref{metricitycondappendix}), we can calculate the left side explicitly by making use of (\ref{Coeff1c}), (\ref{der8}) and (\ref{derx0-1}). Thus we get
\begin{align*}
D_{{X}_{\rho}}\mathfrak{{g}}_{01}&=D_{{X}_{\rho}}[\alpha X_{1}+\beta(X_{1}X_{3}+X_{3}X_{1})]\\
&=(D_{X_{\rho}}\alpha)X_{1}+\alpha (D_{{X}_{\rho}}{X_{1}})+\beta[(D_{{X}_{\rho}}X_{3})
X_{1}+X_{3}(D_{{X}_{\rho}}X_{1})+(D_{{X}_{\rho}}X_{1})
X_{3}+X_{1}(D_{{X}_{\rho}}X_{3})]\\
&=(D_{X_{\rho}}\alpha+\alpha \mathcal{N}_{\rho1}^{1}+2\beta\xi_{3}^{-1}\mathcal{N}_{\rho3}^{0}\mathcal{O}_{1})X_{1}+\alpha \mathcal{N}_{\rho1}^{2}X_{3}+2\beta\xi_{3}\xi_{0}^{-1}\mathcal{N}_{\rho0}^{1}X_{1}^{2}+\beta \mathcal{N}_{\rho1}^{1}(X_{1}X_{3}+X_{3}X_{1})\\
&+\beta \mathcal{N}_{\rho1}^{\;\;\;2}(X_{3}X_{3}+X_{3}X_{3})+2\beta\xi_{3}\xi_{0}^{-1}\mathcal{N}_{\rho0}^{2}X_{1}X_{3},
\end{align*}
while the right side yields
\begin{align*}
\Gamma_{\rho0}^{\sigma}\mathfrak{{g}}_{\sigma 1}+\Gamma_{\rho 1}^{\sigma}\mathfrak{{g}}_{\sigma0}&=\Gamma_{\rho0}^{0}\mathfrak{{g}}_{01}+\Gamma_{\rho0}^{1}\mathfrak{{g}}_{11}+\Gamma_{\rho0}^{2}\mathfrak{{g}}_{21}+\Gamma_{\rho0}^{3}\mathfrak{{g}}_{31}+
\Gamma_{\rho1}^{0}\mathfrak{{g}}_{00}+
\Gamma_{\rho1}^{1}\mathfrak{{g}}_{10}+
\Gamma_{\rho1}^{\;\;\;2}\mathfrak{{g}}_{20}+\Gamma_{\rho1}^{3}\mathfrak{{g}}_{30}\\
&=(a_{0}\Gamma_{\rho1}^{0}+b_{0}\Gamma_{\rho1}^{3})+\alpha(\Gamma_{\rho0}^{0}+\Gamma_{\rho1}^{1}+\Gamma_{\rho0}^{3})X_{1}+\alpha\Gamma_{\rho1}^{2}X_{3}+(a_{1}\Gamma_{\rho1}^{0}+b_{1}\Gamma_{\rho1}^{3})X_{3}\\
&+\delta\Gamma_{\rho0}^{1}X_{1}^{2}+\Gamma_{\rho0}^{2}X_{1}X_{3}+(a_{2}\Gamma_{\rho1}^{0}+b_{2}\Gamma_{\rho1}^{3})X_{3}^{2}+\beta(\Gamma_{\rho0}^{0}+\Gamma_{\rho1}^{1}+\Gamma_{\rho0}^{3})(X_{1}X_{3}+X_{3}X_{1})\\
&+\beta\Gamma_{\rho1}^{2}(X_{3}X_{3}+X_{3}X_{3}).
\end{align*}
When  matching both sides and using the expressions in (\ref{conditions4}) we obtain the following restriction
\begin{align}\label{condition1}
(a_{0}-\xi_{0}\xi_{3}^{-1}b_{0})\Gamma_{\rho1}^{0}&=0.
\end{align}
Finally appling the same procedure for $\mu=1, \nu=3$ where the derivation of $\mathfrak{{g}}_{13}$ can be written as
\begin{align}\nonumber
D_{{X}_{\rho}}\mathfrak{{g}}_{13}&=\xi_{3}\xi_{0}^{-1}(D_{X_{\rho}}\alpha+\alpha \mathcal{N}_{\rho1}^{1}+2\beta\xi_{3}^{-1}\mathcal{N}_{\rho3}^{0}\mathcal{O}_{1})X_{1}+\xi_{3}\xi_{0}^{-1}\alpha \mathcal{N}_{\rho1}^{2}X_{3}+2\beta\xi_{3}^{2}\xi_{0}^{-2}\mathcal{N}_{\rho0}^{1}X_{1}^{2}\\\nonumber
&+\xi_{3}\xi_{0}^{-1}\beta \mathcal{N}_{\rho1}^{2}(X_{3}X_{3}+X_{3}X_{3})+\xi_{3}\xi_{0}^{-1}\beta \mathcal{N}_{\rho1}^{1}(X_{1}X_{3}+X_{3}X_{1})+2\beta\xi_{3}^{2}\xi_{0}^{-2}\mathcal{N}_{\rho0}^{2}X_{1}X_{3}\\\nonumber
&=(b_{0}\Gamma_{\rho1}^{0}+c_{0}\Gamma_{\rho1}^{3})+\xi_{3}\xi_{0}^{-1}\alpha(\Gamma_{\rho1}^{1}+\Gamma_{\rho3}^{3}+\xi_{0}\xi_{3}^{-1}\Gamma_{\rho3}^{0})X_{1}+\xi_{3}\xi_{0}^{-1}\alpha\Gamma_{\rho1}^{2}X_{3}+(b_{1}\Gamma_{\rho1}^{0}+c_{1}\Gamma_{\rho1}^{3})X_{3}\\\nonumber
&+(b_{2}\Gamma_{\rho1}^{0}+c_{2}\Gamma_{\rho1}^{3})X_{3}^{2}+\delta\Gamma_{\rho3}^{1}X_{1}^{2}+\delta\Gamma_{\rho3}^{2}X_{1}X_{3}+\xi_{3}\xi_{0}^{-1}\beta(\Gamma_{\rho1}^{1}+\Gamma_{\rho3}^{3}+\xi_{0}\xi_{3}^{-1}\Gamma_{\rho3}^{0})(X_{1}X_{3}+X_{3}X_{1})\\\nonumber
&+\xi_{3}\xi_{0}^{-1}\beta\;\Gamma_{\rho1}^{2}(X_{3}X_{3}+X_{3}X_{3}),
\end{align}
and making use of the coefficient relations in equations (\ref{conditions4}), (\ref{conditions5}),  we get the condition
\begin{equation}\label{condition2}
(b_{0}-\xi_{0}\xi_{3}^{-1}c_{0})\Gamma_{\rho1}^{0}=0.
\end{equation} 
Furthermore, following a procedure analogous to that described above for $\mu=0,\nu=2$ and $\mu=2,\nu=3$ we obtain the last two constraints
\begin{subequations}
\begin{align}\label{condition3}
(a_{0}-\xi_{0}\xi_{3}^{-1}b_{0})\Gamma_{\rho2}^{0}=0,\\\label{condition4}
(b_{0}-\xi_{0}\xi_{3}^{-1}c_{0})\Gamma_{\rho2}^{0}=0.
\end{align}
\end{subequations}
The remaining indices in (\ref{metricitycondappendix}) do not provide new constraints.

Note now that we have two possible solutions for the equations (\ref{condition1}-\ref{condition4}), one of them involves taking $a_{0}=\xi_{0}\xi_{3}^{-1}b_{0}=\xi_{0}^{2}\xi_{3}^{-2}c_{0}$. This however we discard because it will imply metric components such that the quantum determinant (\ref{QD}) would become zero. Thus the only admissible solution is $\Gamma_{\rho1}^{0}=\Gamma_{\rho2}^{0}=0$.

On the other hand, from (\ref{condition6-1}-\ref{condition7-1}) and (\ref{condition8-1}-\ref{condition9-1}), further relations between the remaining $\Gamma$'s must exist. Indeed using the equations (\ref{condition7}) (or (\ref{condition9}) which are the same because of the condition $a_{i}=\xi_{0}\xi_{3}^{-1}b_{i}=\xi_{0}^{2}\xi_{3}^{2}c_{i}$), together with equation (\ref{4.23b}), which relates the $\phi_{\rho}$ function to the $\mathcal{N}$'s, and the expressions for the coefficients of the metric (\ref{Coeff1b}-\ref{Coeff3b}) as well as the derivation of the central element (\ref{bilin4}) to obtain the following relations 
\begin{subequations}
\begin{align}\label{B10a}
\mathcal{N}_{\rho3}^{0}&=l(\xi_{0}\xi_{3}^{-1}l-2a_{2})^{-1}\mathcal{N}_{\rho0}^{0},\\\label{B10b}
\phi_{\rho}&=l_{1}\mathcal{N}_{\rho0}^{0},
\end{align}
\end{subequations}
and using these equations in addition to the (\ref{condition6-1}), (\ref{condition7-1}), (\ref{condition8-1}) and (\ref{condition9-1}) we arrive at
\begin{subequations}
\begin{align*}
\Gamma_{\rho0}^{0}&=l_{2}(\mathcal{O}_{1})\mathcal{O}_{1}\mathcal{N}_{\rho0}^{0}, &  & \Gamma_{\rho3}^{0}=l_{3}(\mathcal{O}_{1})\mathcal{O}_{1}\mathcal{N}_{\rho0}^{0}\\
\Gamma_{\rho0}^{3}&=-\xi_{0}\xi_{3}^{-1}l_{2}(\mathcal{O}_{1})\mathcal{O}_{1}\mathcal{N}_{\rho0}^{0}, & & \Gamma_{\rho3}^{3}=-\xi_{0}\xi_{3}^{-1}l_{3}(\mathcal{O}_{1})\mathcal{O}_{1}\mathcal{N}_{\rho0}^{0},
\end{align*}
\end{subequations}
where we have defined
\begin{align*}
l&=\gamma_{3}+2\xi_{0}\xi_{3}^{-1}\gamma_{4},\\
l_{1}&=2a_{2}(2a_{2}-\xi_{0}\xi_{3}^{-1}l)^{-1},\\
l_{2}(\mathcal{O}_{1})&=\frac{1}{2}(\xi_{3}a_{0}-b_{0}\xi_{0})^{-1}[(\gamma_{1}+2\gamma_{4}\xi_{3}^{-1}\mathcal{O}_{1})l_{1}+a_{1}l(\xi_{0}\xi_{3}^{-1}l-2a_{2})^{-1}],\\
l_{3}(\mathcal{O}_{1})&=\frac{1}{2}(\xi_{3}b_{0}-c_{0}\xi_{0})^{-1}[(\kappa_{1}+2\kappa_{4}\xi_{3}^{-1}\mathcal{O}_{1})l_{1}+\xi_{0}^{-2}\xi_{3}^{2}a_{1}l(\xi_{0}\xi_{3}^{-1}l-2a_{2})^{-1}].
\end{align*}
It therefore follows readily that by using the conditions
\begin{subequations}
\begin{align}\label{gammas1}
&\Gamma_{\rho1}^{0}=\Gamma_{\rho2}^{0}=0,\\\label{gammas2}
&\Gamma_{\rho1}^{1}=\Gamma_{\rho2}^{2}=\mathcal{N}_{\rho1}^{1},\\\label{gammas3}
&\Gamma_{\rho3}^{1}=\xi_{3}\xi_{0}^{-1}\Gamma_{\rho0}^{1}=\xi_{3}\xi_{0}^{-1}\mathcal{N}_{\rho0}^{1},\\\label{gammas4}
&\Gamma_{\rho1}^{2}=-\Gamma_{\rho2}^{1}=\mathcal{N}_{\rho1}^{2},\\\label{gammas5}
&\Gamma_{\rho3}^{2}=\xi_{3}\xi_{0}^{-1}\Gamma_{\rho0}^{2}=\xi_{3}\xi_{0}^{-1}\mathcal{N}_{\rho0}^{2},\\\label{gammas6}
&\Gamma_{\rho0}^{3}=-\xi_{0}\xi_{3}^{-1}\Gamma_{\rho0}^{0}=-\xi_{0}\xi_{3}^{-1}l_{2}\mathcal{O}_{1}\mathcal{N}_{\rho0}^{0},\\\label{gammas7}
&\Gamma_{\rho1}^{3}=\Gamma_{\rho2}^{3}=0,\\\label{gammas8}
&\Gamma_{\rho3}^{3}=-\xi_{0}\xi_{3}^{-1}\Gamma_{\rho3}^{0}=-\xi_{0}\xi_{3}^{-1}l_{3}\mathcal{O}_{1}\mathcal{N}_{\rho0}^{0},
\end{align}
\end{subequations}
the metricity condition is completely satisfied with these connection symbols valued in
$\mathcal Z(A)$.


\begin{thebibliography}{00}

\bibitem{CCM} A. Chamsedine, A. Connes and V. Mukhanov, {\it Geometry and the quantum: basics }, J. High Energy Phys. {\bf 12} 098(2014).

\bibitem{BM}  E. J. Beggs and S. Majid,  {\it Poisson-Riemannian geometry,}
  J.\ Geom.\ Phys.\  {\bf 114}, 450 (2017).


\bibitem{dubois1} M. Dubois-Violette, {\it Derivations and non-commuttive differential calculus} in C.R. Acad. Sci. Paris, {\bf 307}, Ser. I, 403(1988).

\bibitem{dubois5} M. Dubois-Violette, {Lectures on graded differential algebras and noncommutative geometry}, M. Yoshiaki et al Eds., Noncommutative Differential Geometry and 
its Applications to Physics: Proceedings of the Workshop at Shonan, Japan. Kluwer Academic Publishers, Dodrecht (2001). 

\bibitem{b&m} E. J. Beggs and S. Majid,
  {\it Gravity induced from quantum spacetime,}
  Class.\ Quant.\ Grav.\  {\bf 31} 035020(2014).


\bibitem{dop} D. Bahns, S. Doplicher, K. Fredenhagen and G. Piacitelli, {\it Quantum Geometry on Quantum Spacetime:Distance, Area and Volume Operators},  Commun.\ Math.\ Phys.\  {\bf 308}, 567(2011) .

\bibitem{fred} S. Doplicher, K. Fredenhagen and J. F. Roberts, {\it The quantum Structure of space-time at the Planck scale and quantum fields}, Commun. Math. Phys. {\bf 172}, 187 (1995).

\bibitem{rob} S. Doplicher, K. Fredenhagen and J. F. Roberts, {\it Space-time quantization induced by
classical gravity. } Phys. Lett. {\bf B 331}  39 (1994).

\bibitem{l-w} J. Lukierski and M. Woronowicz, {\it New lie-algebraic and quadratic Deformations of Minkowski space from Twisted Poincar\'e Symmetries}, Phys. Lett. B {\bf 633}, 116 (2006).

\bibitem{camelia} G. Amelino-Camelia, G. Gubitosi and F. Mercati, {\it Discretness of area in noncommutative space}. Phys. Lett. {\bf B 676}, 180 (2009).

\bibitem{tomassi} L. Tomassini and S. Viaggiu, {\it Physically motivated uncertainty relations at the Planck length for an emergent non-commutative spacetime }, Class. Quantum Grav. {\bf 28}, 075001 (2011).

\bibitem{martinetti} P. Martinetti, F. Mercati, L. Tomassini, {\it Minimal length in quantum space and integrations of the line element in Noncommutative Geometry}. Rev.\ Math.\ Phys.\  {\bf 24}, 1250010 (2012) .

\bibitem{connes1} A. Connes, {\it Noncommutative Differential Geometry}, Publ. I.H.E.S {\bf 62}, 257 (1986) .

\bibitem{landi} G. Landi, {\it An Introduction to Noncommutative Spaces and Their Geometries }, Springer-Verlag, Berlin Heidelberg (1997).

\bibitem{S-S} R. G. Swan, {\it Vector Bundles and Projective Modules} Trans. Am. Math. Soc.{\bf 105}, 264(1962).

\bibitem{G-N} I. M. Gelfand and M. A. Naimark, {\it On the embedding of normed linear rings into the rings of operators in Hilbert spaces}. Mat. Sbornik, {\bf 12}, 197(1943).

\bibitem{Dor} R. S. Doran and V. A. Belfi {\it Characterization of $C^{\star}$-algebras, The Gelfand-Naimark Theorems}. Marcel Dekker Inc, New York (1986).


\bibitem{Fuji} I. Fujimoto, {\it A Gel'fand-Naimark Theorem of  $C^{\star}$-algebras}. In R. E. Curto and  P. E. T. Jorgensen (Eds), ``Algebraic Methods in Operator Theory", Birkh\"auser, Boston, 124-133(1994).

\bibitem{GG} I. Gelfand, S. Gelfand, V. Retakh and R. L. Wilson,  {\it Quasideterminants};  Adv. in Mathematics {\bf 193}, 56 (2005).

\bibitem{z} S. Zakrzewski,  {\it Poisson Poincar\'e groups}; arXiv:hep-th/9412099v1.

\bibitem{connes2} A. Connes, {\it Noncommutative Geometry}, Academic Press, San Diego, (1994).

\bibitem{dubois2} M. Dubois-Violette,  {\it Non-commutative differential geometry, quantum mechanics and gauge theory} in : Differential Geometric Methods in Theoretical Physics,
Lecture Notes in Physics, Vol. 375, Springer,  Berlin (1991).


\bibitem{D-V4}  M. Dubois-Violette and T. Masson,  {\it On the first Order Operators in Bimodules}, Lett.\ Math.\ Phys.\  {\bf 37}, 467 (1996).



\bibitem{kull} P. P. Kulish  and E. K. Sklyanin,  {\it in Integrable Quantum Field Theories(Lecture Notes in Physics), {\bf 151}, p.61)}, Springer, Berlin (1982). 



\bibitem{jacobson} N. Jacobson, {\it Lie Algebras}, Dover Publ. Inc., New York, (1979).


\bibitem{Landi2} G. Landi and G. Marmo, {\it Algebraic gauge theory and noncommutative geometry}, in: V. V. Dodonov and  V. I. Man'ko (eds) Group Theoretical Methods in Physics. Lecture Notes in Physics, vol 382. p. 241, Springer, Berlin, (1991).

\bibitem{D-V3} M. Dubois-Violette, {\it Some Aspects of Noncommutative Differential Geometry}, L.P.T.H.E.- ORSAY 95/78, ESI-preprint 285. arXiv:q-alg/9511027.

\bibitem{D-V}  M. Dubois-Violette and P. W. Michor,  {\it Connnections on central bimodules in noncommutative differential geometry}, Journal of Geometry and Physics {\bf 20}, 218(1996) .

\bibitem{papa} D. Papadopulo, S. Rychkov, J. Espin, R. Rattazzi, {\it OPE convergence in conformal field theory}, Phys. Rev. {\bf D86}, 105043 (2012).

\bibitem{asht} A. Ashtekar, J. Fairhurst and J. L. Wills,  {\it Quantum Gravity, shadow states and quantum mechanics}, Class. Quantum Grav. {\bf 20}, 1031 (2003).
\end{thebibliography}
\end{document}